\pgfplotsset{compat=1.18}
\def\spacingset#1{\renewcommand{\baselinestretch}{#1}\small\normalsize}
\g@addto@macro\normalsize{%
  \setlength\abovedisplayskip{6pt}
  \setlength\belowdisplayskip{6pt}
  \setlength\abovedisplayshortskip{4pt}
  \setlength\belowdisplayshortskip{4pt}}
\titlespacing{\section}{0pt}{*0.2}{*0.2}
\titlespacing{\subsection}{0pt}{*0.2}{*0.2}
\titlespacing{\subsubsection}{0pt}{*0.2}{*0.2}
\DeclareFontFamily{OT1}{pzc}{}
\DeclareFontShape{OT1}{pzc}{m}{it}{<-> s * [1.10] pzcmi7t}{}
\DeclareMathAlphabet{\mathpzc}{OT1}{pzc}{m}{it}
\tikzset{every picture/.style={line width=0.75pt}}
\newtheorem{lemma}{Lemma}
\newtheorem{cor}{Corollary}
\newtheorem{theorem}{Theorem}
\newtheorem*{remark}{Remark}
\newtheorem{definition}{Definition}
\newtheorem{assumption}{Assumption}
\begin{document}

{\title{\bf 
Bayesian Estimation \\of Cohort-Time-Stratum Specific Effects \\
in Staggered Difference-in-Differences\thanks{Email: \url{chib@wustl.edu} and \url{kenichi.shimizu@ualberta.ca}}
}
\author{
\begin{tabular}[t]{c@{\extracolsep{8em}}c}
Siddhartha Chib & Kenichi Shimizu \\
Olin School of Business & Department of Economics \\
Washington University & University of Alberta, \\
in St.\ Louis, USA & Canada \\
\end{tabular}}
\date{May 2026}
\maketitle
}
\vspace{-.50in}
\begin{abstract} 
Difference-in-Differences designs with staggered treatment adoption are widely used to study heterogeneous treatment effects across cohorts and time periods. We develop a probabilistic framework for estimating potentially high-dimensional ATT arrays that vary across cohorts, periods, and strata defined by baseline covariates. The framework jointly estimates subgroup-specific treatment effects through a unified likelihood-based model, stabilizing inference in sparse cohort-by-time-by-stratum settings. We establish a Bernstein–von Mises theorem for the ATT array, implying asymptotically valid frequentist coverage of posterior credible intervals. Simulations and an application to minimum wage increases and teen employment demonstrate meaningful finite-sample improvements and important subgroup heterogeneity.

\end{abstract}

{\it Keywords: Bayesian inference, Causal inference, Bernstein--von Mises theorem.}

\newpage
\onehalfspacing

\section{Introduction}

Difference-in-Differences (DiD) methods with staggered treatment adoption are widely used to evaluate policies and interventions. A central insight of the recent literature is that conventional two-way fixed effects (TWFE) estimators generally fail to recover causally interpretable  effects when treatment timing varies across units and treatment effects are heterogeneous (\citealp*{deChaisemartin_dHaultfoeuille2020AER, Goodman2021JoE,SunAbraham2021dynamic,AtheyImbens2022design,BorusyakJaravelSpiess2024Restud}). 
Modern approaches instead focus on identifying cohort- and time-specific Average Treatment Effects on the Treated (ATTs) under parallel trends and no-anticipation assumptions (\citealp{CallawaySantAnna2021did}).

Existing staggered DiD procedures transparently identify cohort-time treatment effects under the identifying assumptions and provide aggregation schemes for summarizing them. However, disaggregated subgroup-specific ATTs are often not the primary target of these procedures. On the other hand, in many empirical applications, researchers  seek to understand how treatment effects vary across subpopulations defined by baseline characteristics. For example, the effects of a labor market intervention may differ systematically across regions, demographic groups, or firm sizes. \cite*{Roth2023JoE} (Section 6) highlight this growing interest in conditional treatment effect heterogeneity within modern DiD designs. However, once ATT parameters are indexed jointly by treatment cohort, time period, and covariate-defined stratum, estimation becomes a high-dimensional problem with many sparse cohort-by-time-by-stratum cells. Existing staggered DiD procedures can in principle be applied separately within strata but such approaches may become unstable when effective sample size per stratum is small.

This paper develops a Bayesian framework for estimating heterogeneous ATTs in staggered DiD designs. We study ATT parameters indexed jointly by treatment cohort, time period, and strata defined by pre-treatment covariates, allowing treatment effects to vary flexibly across subpopulations. Our approach discretizes the baseline covariate space into user-defined strata and jointly models untreated and treated potential outcomes, imposing the identifying assumptions of modern staggered DiD methods. The resulting likelihood-based representation treats the potentially high-dimensional ATT array as an explicit parameter object, facilitating information sharing across related cells and regularization. 

The framework is particularly useful in settings where some cohort-by-time-by-stratum cells contain relatively few observations. Our framework jointly estimates the ATTs across cohorts and strata, stabilizing finite-sample estimation relative to approaches that estimate subgroup-specific ATTs separately. Simulation studies show finite-sample advantages of our joint-estimation approach over an existing method  applied separately on each stratum. 

The proposed framework also provides a probabilistic approach to assessing pre-treatment restrictions. Rather than conditioning on conventional pre-trend tests, which are known to have low power and may induce post-selection distortions (\citealp*{FreyaldenhovenChristianShapiro2019AER,Roth2022AERI,Roth2023JoE}), we compare models with and without pre-treatment restrictions using marginal likelihoods and propagate uncertainty about pre-trends directly into posterior inference on ATTs.

The same marginal likelihood framework can also be used to compare alternative heterogeneity specifications, although formal selection of the stratification structure is not the primary focus of this paper. This provides a probabilistic measure of how much subgroup heterogeneity is supported by the data while accounting for the increased complexity of richer specifications. At the same time, the degree of stratification ultimately remains a substantive modeling choice that may also reflect policy or scientific considerations beyond statistical fit.

Finally, we establish a Bernstein--von Mises theorem for the ATT array, implying that posterior credible intervals possess asymptotically valid frequentist coverage under standard regularity conditions. 
An application to minimum wage increases and teen employment in the United States reveals substantially larger employment responses in smaller counties than in larger counties, illustrating the importance of subgroup-specific ATT estimation.

\textbf{Related literature:}
Our paper contributes to several related literature. First, we build on work studying treatment effect heterogeneity in DiD designs using observed pre-treatment covariates. \citet{abadie2005semiparametric} develops semiparametric reweighting methods for heterogeneous DiD effects, while \citet*{ImaiQinYanagi2025doubly} study cohort-time conditional treatment effect functions with a continuous pre-treatment covariate. In contrast, we focus on estimating treatment effects that  vary jointly across treatment cohorts, time periods, and strata defined by potentially multiple pre-treatment covariates. Second, we contribute to the growing literature on Bayesian and empirical Bayes methods for DiD and event-study designs, including \citet*{BreunigLiuYu2024semiparametric} and \citet{BotosaruLiu2025time}. Relative to these approaches, we focus on joint modeling and estimation of subgroup-specific ATTs  with sparse cohort-by-time-by-stratum cells while preserving the estimand structure of modern DiD procedures, particularly the grouped ATT framework of \citet{CallawaySantAnna2021did}. 
Third, unlike imputation-based approaches that estimate untreated outcome models using untreated observations and then recover counterfactuals for treated units (\citealp*{BorusyakJaravelSpiess2024Restud}; \citealp*{GardnerThakralToYap2025}) or pooled regression approaches such as \cite{Wooldridge2025}, 
we specify a unified probabilistic model for both treated and untreated outcomes. This formulation allows the ATT array to be expressed directly as a linear function of model parameters, facilitating joint estimation and regularization across cohort-time-stratum cells. These features are particularly important when treatment effects are defined for fine subgroups, where effective sample sizes within cells can be small and separate estimation may become unstable.

The remainder of the paper proceeds as follows. 
Section \ref{sec:model} presents the model.  
Section \ref{sec:estimation} describes estimation and inference. 
Section \ref{sec:simulation_application} reports simulations and an empirical application. 
Online Appendix contains additional results on technical details, simulations, and application.


\section{The model}\label{sec:model}

We consider an event-study setting in which units are treated at different times.
We observe a random sample of $n$ units over $T$ time periods. 
Units $i=1,\dots,n$ are observed over periods $t=1,\dots,T$.
Let $y_{it}$ be the observed outcome for unit $i$ at time $t$. 
Let $\mathcal{S}\subset \{2,\ldots,T\}$ be the set of treatment timings. 
We let $S_i$ be the treatment cohort. 
For  treated units, $S_i\in\mathcal{S}$ indicates the first treated period, and as a convention, we let $S_i=1$ for never-treated units. 
We suppose that never-treated units exist. 
Each unit  is randomly assigned (ex-ante researcher's perspective) to one of the cohorts in $\{1\}\cup \mathcal{S}$. 

Let $D_{it}$ be the treatment indicator. Treatment is absorbing in the sense that 
$D_{it}=\mathbf{1}\{t\ge S_i\}$ if $S_i\ge 2$. 
$D_{it}=0 \ \forall t$ if $S_i=1$. 

We denote baseline covariates by $w_i \in \mathcal{W}\subset \mathbb{R}^{K}$, where $\mathcal{W}$ is compact. The vector $w_i$ does not include a constant term
for identification purposes. We do not use post-treatment covariates in order to avoid
potential endogeneity.

\subsection{Potential outcomes and treatment effects}
Let $Y_{it}(0)$ be unit $i$'s potential outcome if it was never-treated (i.e. $S_i=1$). 
For $s\in\mathcal{S}$, let $Y_{it}(s)$ be the potential outcome of unit $i$ if it gets treatment in period $s$ (i.e. $S_i=s$). The observed and potential outcomes are related through 
$y_{it}=Y_{it}(0)\cdot 1(S_i=1)+\sum_{s\in\mathcal{S}}Y_{it}(s)\cdot 1(S_i=s)$.

For $s\in \mathcal{S}$,  the average treatment effect for treated (ATT) for units treated at period $s$ measured at time $t$, conditional on pre-treatment covariate $w$ (with an abuse of notation on conditioning on possibly continuous covariates), is defined as 
\begin{equation}
    \tau_{\mathrm{ATT}}(s,t; w) = \E[Y_{it}(s)-Y_{it}(0) \vert S_i = s, w_i=w] \label{eq:ATT}.
\end{equation}

\subsubsection{Motivation for stratification and  potential outcome modeling}
As the ATT \eqref{eq:ATT} is cohort-time-covariate specific, one can quickly end up with numerous parameters. \cite{CallawaySantAnna2021did} address this by aggregating the ATTs at least across the covariate space, and possibly also over cohort and/or time.  

However, in some applications,  researchers may wish to retain the level of heterogeneity for policy purposes, as described in introduction. We exploit two key ideas to enable  estimation of the potentially high-dimensional ATTs that depend on cohort, time, and covariates. The first is  discretization of the covariate space and the second is laying out a tractable model that  relates the treatment effect parameters to the evolution of the potential outcomes while satisfying the identifying assumptions. 
Below, we first describe the stratification and, in the next section, introduce our model for potential outcomes. 

We assume that the units are  divided into $G$ known strata or clusters defined by the baseline covariate $w_i$ observed before treatments. For example, when $w_i$ is one-dimensional, $G-1$ thresholds on the real line define the strata. When $w_i$ is multi-dimensional, the whole vector or a subvector of $w_i$ can be used to define the strata.  

In this paper, we take the strata as known. The most natural partitioning rule is to divide the covariate space so that the number of units is roughly equal across strata in each cohort. 
In this paper, we do not tackle the problem of searching over  latent partitions. 
Let $G_i\in \{1,\ldots,G\}\equiv \mathcal{G}$ be the stratum that unit $i$ belongs to. 
The ATTs \eqref{eq:ATT} are now indexed by  cohort, time, and stratum. 
\begin{definition}[Cohort-time-stratum specific ATTs]
The ATT for units 
treated at period $s\in \mathcal{S}$, 
belonging to stratum $g\in\mathcal{G}$, and 
measured at time $t$ is defined as 
\begin{equation}
    \tau_{\mathrm{ATT}}(s,t; g) = \E[Y_{it}(s)-Y_{it}(0) \vert S_i = s, G_i=g] \label{eq:ATT_stratum}.
\end{equation}    
\end{definition}

The cohort-time-stratum  ATTs are  identified under the following two  conditions assumed to hold within strata. Across the entire population, they need not hold if cohort assignment is correlated with the covariates. We omit the identification proof as it is a straightforward application of the standard argument (e.g.\ \cite{CallawaySantAnna2021did}) within strata.
\begin{assumption}[No-anticipation (NA)]\label{assm:NA}
For each stratum $g\in\mathcal{G}$, 
expected values of the treated and untreated potential outcomes for cohort
$s \in \mathcal{S}$ coincide in the pre-treatment periods:
\[
\E[Y_{it}(s)\vert S_i = s, G_i=g]=\E[Y_{it}(0)\vert S_i = s, G_i=g],
\; t<s .
\]
\end{assumption}
This assumption rules out anticipation effects for units that are eventually treated
and is likely to hold when treatment timing is not chosen by the units themselves.
Under this assumption, $\tau_{\mathrm{ATT}}(s,t;g)=0$ for all $t<s$ and $g$.

\begin{assumption}[Parallel Trends (PT)]\label{assm:PT}
For each stratum $g\in\mathcal{G}$, for each treatment cohort $s\in\mathcal{S}$, the expected increment in the untreated
potential outcome coincides with that of the never-treated units in all
post-treatment periods:
\[
\E\!\left[Y_{it}(0)-Y_{i,t-1}(0)\vert S_i = s, G_i=g\right]
=
\E\!\left[Y_{it}(0)-Y_{i,t-1}(0)\vert S_i = 1, G_i=g\right],
\;   t\ge s .
\]
\end{assumption}
This assumption states that, after treatment adoption, the counterfactual
(untreated) outcome increments for treated cohorts evolve in parallel with those
of the never-treated units. That is, absent treatment, treated and never-treated
units would have experienced the same expected changes in outcomes in 
post-treatment periods.

When deciding the strata, there are two important considerations to be made. 
First, the strata should be defined so that the ATTs are useful for policy recommendations. 
For example, in medical applications,  it would be informative to learn effects of a new medicine specific to some demographic variable, such as age (e.g. young vs old), but probably not other variables such as income. 
Second, the identifying assumptions need to be realistic within  strata. 

\textit{Remark:} 
While the proposed framework stabilizes estimation through joint modeling across cohort-time-stratum cells, it is not exempt from dimensionality considerations. In particular, excessively fine stratification may leave insufficient information within cells to reliably estimate common model components and calibrate prior hyperparameters. Thus, the choice of strata should balance substantive policy relevance against statistical precision. Nevertheless, the framework remains applicable in settings with relatively small numbers of observations per cell, where conventional asymptotic approximations underlying separate stratum-by-stratum estimation may become unreliable. Later, we discuss how marginal likelihood can be used as a diagnostic tool that can warn against an extremely fine stratification that data cannot support.  


\subsection{The model for potential outcomes}
We now describe our model for potential outcomes. The goal is to lay out a tractable model that relates the ATT parameters to the  sequence of potential outcomes while explicitly imposing
the parallel trends and no-anticipation assumptions (Assumptions \ref{assm:NA} and \ref{assm:PT}). 
We begin with the never-treated units. 
Consider $\E[Y_{it}(0)\vert S_i=1, G_i=g]$, the expected outcome for the never-treated units in stratum $g\in \mathcal{G}$. Without loss of generality, we let 
$\beta_{11,g}=\E[Y_{i1}(0)\vert S_i=1, G_i=g]$ denote the mean initial value for the never-treated units in this stratum. 
We then let $b_{12,g}=\E[Y_{i2}(0)\vert S_i=1, G_i=g]-\E[Y_{i1}(0)\vert S_i=1, G_i=g]$ be the increment from period $1$ to $2$. Generally, by letting $b_{1k,g}$ be the increment from period $k-1$ to $k$, we can write 
\begin{align}
    \E[Y_{it}(0)\vert S_i=1, G_i=g]&=\beta_{11,g} + \sum_{k=2}^{t}b_{1k,g}, \ t=1,\ldots,T .     \label{eq:PO_0_seq_1}
\end{align} 
Similarly, for each treated cohort $s\in\mathcal{S}$ and stratum $g\in\mathcal{G}$, we let, for $t=1,\ldots,T$, 
\begin{equation}\label{eq:PO_1_seq_s}
\E[Y_{it}(s)\vert S_i=s, G_i=g]=
\beta_{s1,g} + \sum_{k=2}^{t}b_{sk,g} =
\begin{cases}
\beta_{s1,g}+\sum_{k=2}^{t} b_{sk,g}, 
& \text{if } t<s, \\[2pt]
\beta_{s1,g}+\sum_{k=2}^{s-1} b_{sk,g}+\sum_{k=s}^{t} b_{sk,g}, 
& \text{if } t\ge s.
\end{cases}
\end{equation}
The initial value $\beta_{s1,g}$ and increment parameters  $b_{sk,g}$ are all specific to the cohort $s$ and stratum $g$. The parameters in the increment representations for the observed outcomes \eqref{eq:PO_0_seq_1} and \eqref{eq:PO_1_seq_s} are unique. We will revisit this point shortly. 

The unique feature of our approach is to now model  the counterfactual untreated potential outcomes based on \eqref{eq:PO_0_seq_1} and \eqref{eq:PO_1_seq_s}, imposing the identifying assumptions: for $t=1,\ldots,T$, we let 
\begin{equation}\label{eq:PO_0_seq_s}
\E[Y_{it}(0)\vert S_i=s, G_i=g]=
\begin{cases}
\beta_{s1,g}+\sum_{k=2}^{t} b_{sk,g}, 
& \text{if } t<s, \\[2pt]
\beta_{s1,g}+\sum_{k=2}^{s-1} b_{sk,g}+\sum_{k=s}^{t} b_{1k,g}, 
& \text{if } t\ge s.
\end{cases}
\end{equation}
By construction, this model satisfies NA and PT. NA is satisfied because 
$\E[Y_{it}(s)\vert S_i=s, G_i=g]$ and $\E[Y_{it}(0)\vert S_i=s, G_i=g]$ coincide for all
$t<s$. PT is satisfied since the increments in the untreated
outcomes in the treated cohorts,
$\E[Y_{it}(0)-Y_{i,t-1}(0)\vert S_i=s, G_i=g]$, for $t\ge s$, match those of the
never-treated units,
$\E[Y_{it}(0)-Y_{i,t-1}(0)\vert S_i=1, G_i=g]$. 
In this way, the model explicitly incorporates the restriction that, absent treatment, post-treatment outcome dynamics evolve identically across cohorts.

This model can be concisely presented in a vector-matrix form. Define the $T\times 1$ vectors  
$Y_{i}(0) =(Y_{i1}(0),\ldots,Y_{iT}(0))'$, and for $s\in\mathcal{S}$, 
$Y_{i}(s) =(Y_{i1}(s),\ldots,Y_{iT}(s))'$. 
Also, let 
\[
\beta_{sg} =
    \big(\beta_{s1,g},
    b_{s2,g},
    \cdots, 
    b_{sT,g}\big)',
\quad 
s\in \{1\} \cup \mathcal{S} \text{ and } g\in\mathcal{G}.
\]
Finally, let $L_T$ denote the $T\times T$  lower triangular matrix of ones, $L_s^{\text{pre}}$ the matrix obtained by setting columns $s$ to $T$ of $L_T$ to zero,
and $L_s^{\text{post}}$ the matrix obtained by setting columns 1 to $s-1$ of $L_T$ to zero. For example, when $T = 5$ and $s = 4$, these matrices are of the form
\[
L_T =
\begin{bmatrix}
1 & 0 & 0 & 0 & 0 \\
1 & 1 & 0 & 0 & 0 \\
1 & 1 & 1 & 0 & 0 \\
1 & 1 & 1 & 1 & 0 \\
1 & 1 & 1 & 1 & 1
\end{bmatrix}, \quad
L_s^{\text{pre}} =
\begin{bmatrix}
1 & 0 & 0 & 0 & 0 \\
1 & 1 & 0 & 0 & 0 \\
1 & 1 & 1 & 0 & 0 \\
1 & 1 & 1 & 0 & 0 \\
1 & 1 & 1 & 0 & 0
\end{bmatrix}, \quad
L_s^{\text{post}} =
\begin{bmatrix}
0 & 0 & 0 & 0 & 0 \\
0 & 0 & 0 & 0 & 0 \\
0 & 0 & 0 & 0 & 0 \\
0 & 0 & 0 & 1 & 0 \\
0 & 0 & 0 & 1 & 1
\end{bmatrix}.
\]
It is easily seen that 
$$
L_T = L_s^{\text{pre}}+L_s^{\text{post}},
$$
which we use repeatedly. Then, the models for the observed outcome vectors, namely  
\eqref{eq:PO_0_seq_1}
and 
\eqref{eq:PO_1_seq_s}, 
can be written as 
\begin{align}
    \mathbb{E}[Y_{i}(0)\vert S_i=1, G_i=g]&=L_T \beta_{1g} = L_s^{\text{pre}} \beta_{1g} +L_s^{\text{post}} \beta_{1g}, \label{eq:PO_0_seq_1_mat}\\
    \mathbb{E}[Y_{i}(s) \vert S_i=s, G_i=g]&=L_T \beta_{sg} = L_s^{\text{pre}} \beta_{sg} +L_s^{\text{post}} \beta_{sg}.\label{eq:PO_1_seq_s_mat}  
\end{align}
As $L_T$ is invertible, $\beta_{1g} $ and $\beta_{sg}$ are unique in these representations. 
Similarly, we write \eqref{eq:PO_0_seq_s} as
\begin{equation}
    \mathbb{E}[Y_{i}(0) \vert S_i=s, G_i=g]=L_s^{\text{pre}} \beta_{sg} + L_s^{\text{post}} \beta_{1g}.\label{eq:PO_0_seq_s_mat}      
\end{equation}
$L_s^{\text{pre}}$ and $L_s^{\text{post}}$ capture the increments for the pre-treatment periods and post-treatment periods, respectively. 
It now follows that the $T$-dimensional ATT vector for cohort $s$ and stratum $g$ is given by
    \begin{align}
        \tau_{ATT}^{(sg)} &= L_s^{\text{post}} (\beta_{sg} - \beta_{1g}), \label{eq:att_matrix_form}
    \end{align}
    where the typical element is
    $
    \tau_{ATT}(s,t;g) = \mathbb{E}[Y_{it}(s) - Y_{it}(0)\vert S_i=s, G_i=g] = \sum_{k=s}^t (b_{sk,g} - b_{1k,g})  \text{ for } t = s, \ldots, T,
    $
    and $\tau_{ATT}(s,t;g) =0$ for $t<s$ by the no-anticipation assumption.
Thus, the ATTs are functions of the parameters and take the form of difference-in-differences.

\subsection{Re-parametrization}
The models 
\eqref{eq:PO_0_seq_1_mat}, 
\eqref{eq:PO_1_seq_s_mat}, and 
\eqref{eq:PO_0_seq_s_mat}  
allow for arbitrary evolution of the potential outcomes under the identifying restrictions, but are over-parameterized. This is evident from the expression for the ATT in \eqref{eq:att_matrix_form}, which depends on the full set of increment parameters. We achieve a more parsimonious formulation by letting 
\[
\beta_{sg} = \beta_{1g} + \delta_{sg}, \quad s \in \mathcal{S} \text{ and } g \in \mathcal{G},
\]
where $\delta_{sg}$ represents the difference in increments between cohort $s$ and the never-treated units, within stratum $g$. 
Substituting $\beta_{sg}$ in 
\eqref{eq:PO_1_seq_s_mat}
and 
\eqref{eq:PO_0_seq_s_mat},
we get 
\begin{align}
    \mathbb{E}[Y_{i}(s) \vert S_i=s, G_i=g]
    &=L_T \beta_{1g} + L_T \delta_{sg}.  \label{eq:PO_1_seq_s_repara_mat}  \\
    \mathbb{E}[Y_{i}(0) \vert S_i=s, G_i=g]
    &=L_s^{\text{pre}} (\beta_{1g} + \delta_{sg}) + L_s^{\text{post}} \beta_{1g} 
    =L_T \beta_{1g} + L_s^{\text{pre}} \delta_{sg},      \label{eq:PO_0_seq_s_repara_mat}    
\end{align}

\begin{remark}
Subtracting (\ref{eq:PO_0_seq_s_repara_mat}) from (\ref{eq:PO_1_seq_s_repara_mat}) we get
\begin{align}
    \tau_{ATT}^{(sg)} &= L_s^{\text{post}} \delta_{sg},
    \label{eq:att_repara_matrix}
\end{align}
where the typical element of this vector is:
\begin{align}
    \tau_{ATT}(s,t;g) &= \mathbb{E}[Y_{it}(s) - Y_{it}(0) \vert S_i=s, G_i=g] = \sum_{k=s}^t \delta_{sk,g}, 
    \quad \text{for } t = s, \ldots, T,
    \label{eq:att_repara_element}
\end{align}
and $\tau_{ATT}(s,t;g) =0$ for $t<s$ by the no-anticipation assumption. 
In this parameterization, the number of parameters involved in 
the definition of the ATT, relative to \eqref{eq:att_matrix_form}, is reduced by half. 
\end{remark}

The  reparametrization simplifies the selection of priors. 
For instance, if one believes ex-ante  that the ATTs are zero for every post-period $t \geq s$, this can be encoded by centering the prior distribution of $\delta_{st,g}$ around zero. Moreover, knowledge about the ATTs from related studies can also be utilized. 

One may wonder whether it is possible to eliminate the cumulative summation in the definition of the ATTs by considering a simpler model like
\begin{align*}
    \mathbb{E}[Y_{it}(0)\vert S_i=1,G_i=g]&=\tilde{\beta}_{11,g}+\tilde{\beta}_{1t,g}, \ t=1,\ldots,T,\\
    \mathbb{E}[Y_{it}(s)\vert S_i=s,G_i=g]&=\tilde{\beta}_{s1,g}+\tilde{\beta}_{st,g}, \ t=1,\ldots,T,\\
    \E[Y_{it}(0)\vert S_i=s, G_i=g]&=
\begin{cases}
\tilde{\beta}_{s1,g}+\tilde{\beta}_{st,g}, 
& \text{if } t<s, \\[2pt]
\tilde{\beta}_{s1,g}+\tilde{\beta}_{1t,g}, 
& \text{if } t\ge s.
\end{cases}
\end{align*}
so that we have the single-parameter representation of the ATTs after a reparametrization: 
$\tau_{ATT}(s,t;g) = \tilde{\beta}_{st,g}-\tilde{\beta}_{1t,g}=\tilde{\delta}_{st,g}$. 
However, this formulation violates the parallel trend condition at $t=s$: 
$\E[Y_{is}(0)-Y_{i,s-1}(0)\vert S_i=s, G_i=g]=\tilde{\beta}_{1s,g}-\tilde{\beta}_{s,s-1,g}$ 
which is generally not equal to  
$\E[Y_{is}(0)-Y_{i,s-1}(0)\vert S_i=1, G_i=g]=\tilde{\beta}_{1s,g}-\tilde{\beta}_{1,s-1,g}$. 
This observation highlights the fact that the parallel trends condition constrains the increments of the expected potential outcomes rather than their levels. Hence, the cumulative-sum representation in \eqref{eq:att_repara_element} arises naturally from imposing parallel trends on the evolution of untreated potential outcomes and is not merely a modeling convenience.

\subsection{Pre-test as a model comparison}\label{sec:prePT_ML}

We now develop a Bayesian framework to evaluate whether the parallel trends (PT) assumption holds in the pre-treatment periods. While support for parallel trends in the pre-treatment periods is often taken as an evidence that the assumption is credible post-treatment \citep{CallawaySantAnna2021did}, such assessments can induce frequentist pre-test bias when inference is conditional  on test outcomes. 
Also, the conventional implementation of the pre-test does not provide clear 
guidance when the hypothesis is rejected (\citealp{Roth2023JoE}). Even in such a case, the researcher may still want to
estimate the treatment effect of interest.

We instead frame the question as one of model comparison. We estimate models that impose and relax the pre-treatment PT restriction and compare them using marginal likelihoods computed by the method of \citet{chib1995}. This strategy measures the empirical support for the identifying PT assumption directly, without conditioning inference on a preliminary test.
Recall that the parallel trends condition (Assumption~\ref{assm:PT}) is given by
\[
\E[ Y_{it}(0)-Y_{i,t-1}(0)\vert S_i=s,G_i=g ] = \E[Y_{it}(0)-Y_{i,t-1}(0)\vert S_i=1,G_i=g  ]=b_{1t,g}, \; t\geq s.
\]
Suppose now this condition also holds for the pre-treatment periods (i.e.\ for $t<s$). Together with the no-anticipation condition (Assumption \ref{assm:NA}), 
this leads to  
\[
\E[ Y_{it}(s)-Y_{i,t-1}(s)\vert S_i=s,G_i=g ]=b_{1t,g}, \text{ for } t<s, 
\]
which implies that $\delta_{st,g}=0$ for $t<s$. The resulting model therefore has a smaller effective number of parameters in the vector $\delta_{sg}$. 

Under the additional restriction of parallel trends on pre-treatment periods, the models   \eqref{eq:PO_1_seq_s} and \eqref{eq:PO_0_seq_s} become 
\begin{align*}
    \E[Y_{it}(s) \vert S_i=s,G_i=g ]&=
    \begin{cases}
      \beta_{s1,g} + \sum_{k=2}^{t}b_{1k,g}, & \text{if}\ t<s \\
      \beta_{s1,g} + \sum_{k=2}^{s-1}b_{1k,g}  + \sum_{k=s}^{t}b_{sk,g}, & \text{if}\ t\geq s,
    \end{cases} \\
    \E[Y_{it}(0) \vert S_i=s,G_i=g ]&=
    \begin{cases}
      \beta_{s1,g} + \sum_{k=2}^{t}b_{1k,g}, & \text{if}\ t<s \\
      \beta_{s1,g} + \sum_{k=2}^{s-1}b_{1k,g}  + \sum_{k=s}^{t}b_{1k,g}, & \text{if}\ t\geq s,
    \end{cases}     
\end{align*}
for $t=1,\ldots,T$, $s\in \mathcal{S}$, and $g\in \mathcal{G}$. 
See Online Appendix for the matrix form of this model.

It may be noted that if the researcher has a prior probability that the parallel trend condition holds (or does not hold) in the pre-treatment periods, then the corresponding posterior probabilities can be obtained based on the computed marginal likelihoods. Inference on ATTs may then be averaged across specifications using Bayesian model averaging, thereby incorporating the ex-post uncertainty regarding the pre-treatment parallel trend condition directly into  inference on ATTs. This is in contrast to the traditional approach that conditions on passing the pre-trends test and is known to have low power. 

\paragraph{Comparing different strata specifications} 
Although this is not a primary focus of the paper,  the same model comparison principle extends naturally to comparison of  alternative heterogeneity specifications. Richer stratifications introduce additional ATT parameters and greater flexibility, but they also increase model complexity and may weaken identification in sparse cohort-by-time-by-stratum cells. The marginal likelihood automatically balances these competing forces by rewarding improved fit while penalizing unnecessary complexity.

At the same time, model comparison need not mechanically determine the preferred specification. In empirical applications, researchers may prefer richer heterogeneity structures (e.g. $G=2$ over $G=1$) for substantive or policy reasons even when simpler specifications receive greater marginal likelihood support. For example, a policymaker may still wish to distinguish treatment effects across demographic or regional subgroups despite limited statistical evidence favoring additional stratification. Accordingly, we view marginal likelihood comparison primarily as a probabilistic diagnostic tool that helps quantify how much heterogeneity is supported by the data.
\section{Estimation and Inference}\label{sec:estimation}
In this section, we present an estimation approach for the baseline model without pre-period restrictions. Online Appendix  illustrates the case under the restriction that parallel trends hold in pre-periods. 
Let $y_{i}=(y_{i1},\ldots,y_{iT})'$ be the $T$-dimensional vector of observed outcomes. 
By consistency, we then have $y_{i}=Y_{i}(0)$ if $S_i=1$ and $y_{i}= Y_{i}(s)$ if $S_i=s\in \mathcal{S}$. 
We can write 
from 
\eqref{eq:PO_0_seq_1_mat}
and 
\eqref{eq:PO_1_seq_s_repara_mat}, 
\begin{align}
    y_{i}
    &=
    L_T \beta_{1g} 
    +
    \varepsilon_{i},  \hspace{2cm} \text{ if } S_i = 1 \text{ and } G_i=g,\label{eq:PO_nevertreated_matrix}\\
    y_{i}
    &=
    L_T \beta_{1g}
    +
    L_T \delta_{sg}
    +\varepsilon_{i},  \hspace{0.5cm} \text{ if } S_i =  s  \text{ and } G_i=g , \label{eq:PO_treated_obs_matrix}
\end{align}
where 
$\varepsilon_{i} =(\varepsilon_{i1},\ldots,\varepsilon_{iT})'$ is a mean-zero vector.

For simplicity, we suppose that the errors are Gaussian 
$\varepsilon_{i} \vert S_i=s \sim N_T(0, \Sigma_s), \ s\in \{1\}\cup \mathcal{S}$, 
where each of the $T \times T$ covariance matrices are assumed to be diagonal: 
$\Sigma_s =\text{diag}( \sigma^2_{s1},\ldots, \sigma^2_{sT})$. 
For greater generality, one can allow the errors to be autocorrelated and heavy-tailed. 
The proposed model can be used in conjunction with Bayesian semiparametric methods such as Dirichlet process mixtures. 
Alternatively, in order to lower the risk of mis-specification, the sandwich posterior approach proposed in 
\cite{muller2013risk} can be used in conjunction with
our model and estimation strategy. Given the potential outcome modeling for staggered
DiD laid out in this paper, such extensions are now possible.

\subsection{Correlated random effects}
In order to capture  unit-level  variation in the initial value $\beta_{s1,g}$, we introduce heterogeneous initial values $\alpha_i$. Specifically, we employ  correlated random effects:
\[
\alpha_i \vert w_i \sim N(w_i'\phi_s, D_s).
\]
The random initial values are related to $\beta_{s1,g}$ in  our formulation via: $\beta_{s1,g}=\mathbb{E}[\alpha_i\vert S_i=s, G_i=g] = \mathbb{E}[w_i\vert S_i=s,G_i=g]'\phi_s$, and the resulting model still satisfies the identifying assumptions, namely NA and PT.  

Because the heterogeneous initial values $\alpha_i$ and $\beta_{s1,g}$ are not separately identified, we impose the normalization $\beta_{s1,g}=0$, which also implies that $\delta_{s1,g}=0$. 
For this reason, we define the $(T-1)$-dimensional vectors $\eta_{1g}$ and $\xi_{sg}$ by removing the first elements in $\beta_{1g}$ and $\delta_{sg}$, respectively. While we work with these lower dimensional vectors, the original vectors can be recovered as $\beta_{1g}=R_0 \eta_{1g}= [0; \eta_{1g}']' \in \mathbb{R}^T$ and $\delta_{sg}=R_0 \xi_{sg}= [0; \xi_{sg}']' \in \mathbb{R}^T$, where  $R_0 = [0_{T-1}';I_{T-1}]$ is a $T\times (T-1)$ selection matrix.

\subsection{Likelihood}
Let $N_{sg}=\{i: S_i=s \text{ and } G_i=g \}$ be the set of units that belong to the treatment cohort $s$ and covariate stratum $g$. Let $1_T$ be the $T$-dimensional column vector of ones. 
For each stratum $g\in \mathcal{G}$, we have the following. For  $i\in N_{1g}$,  
\begin{align*}
    y_i
    &=1_T \alpha_i + L_T R_0 \eta_{1g} + \varepsilon_{i}\\
    &=1_T \alpha_i + L_T \beta_{1g} + \varepsilon_{i}, \quad \varepsilon_{i}\sim N(0,\Sigma_1)  
\end{align*}
and for $i \in N_{sg}, s\in \mathcal{S}$,
\begin{align*}
    y_i
    &=1_T \alpha_i  + L_T R_0 \eta_{1g} + L_TR_0\xi_{sg} + \varepsilon_{i}  \\
    &=1_T \alpha_i  + L_T \beta_{1g} + L_T\delta_{sg} + \varepsilon_{i}    , \quad \varepsilon_{i}\sim N(0,\Sigma_s)    
\end{align*}
where $\beta_{11,g}=0$ and $\delta_{s1,g}=0$. 
Let $\theta = (\eta_{1g},\xi_{sg},\Sigma_s,\phi_s,D_s)$ and $\alpha=\{\alpha_i\}.$
The likelihood conditional on the heterogeneous initial values $\alpha$ is
\begin{equation}
    f(y\vert\theta, \alpha) = 
    \prod_{g=1}^G 
    \prod_{i\in N_{1g} } 
    N_T\left( y_i \vert 1_T \alpha_i  + L_T R_0 \eta_{1g} , \Sigma_1\right)
    \cdot 
    \prod_{s\in\mathcal{S}}
    \prod_{g=1}^G
    \prod_{i\in N_{sg} }
    N_T\left( y_i \vert 1_T \alpha_i + L_T R_0 \eta_{1g} + L_TR_0 \xi_{sg}, \Sigma_s \right). \label{eq:like}    
\end{equation}
Integrating out $\alpha$,  we obtain 
\begin{equation}
    f(y\vert\theta) = 
    \prod_{g=1}^G 
    \prod_{i\in N_{1g} } 
    N_T\left( y_i \vert 1_T w_i'\phi_1  + L_T R_0 \eta_{1g} , \Lambda_1\right)
    \cdot 
    \prod_{s\in\mathcal{S}}
    \prod_{g=1}^G
    \prod_{i\in N_{sg} }
    N_T\left( y_i \vert 1_T w_i'\phi_s + L_T R_0 \eta_{1g} + L_TR_0 \xi_{sg}, \Lambda_s \right),\label{eq:like_int}    
\end{equation}
where for $s\in \{1\}\cup \mathcal{S}$, 
\[
    \Lambda_s = \Sigma_s  + D_s 1_T 1_T'.
\]

\subsection{Prior}
The priors are independently specified as follows. 
\begin{align*}
    \eta_{1g}&\overset{\text{ind}}{\sim} N_{T-1}(\mu_{\eta_{1g}},V_{\eta_{1g}}) , \hspace{2cm} g\in\mathcal{G} \\
    \xi_{sg}&\overset{\text{ind}}{\sim} N_{T-1}(\mu_{\xi_{sg}},V_{\xi_{sg}}), \hspace{2cm} g\in\mathcal{G}, s\in \mathcal{S} \\
    \sigma^2_{st}&\overset{\text{ind}}{\sim} \text{InvGam}(a_{st}/2, b_{st}/2) , \hspace{1.1cm} t=1\ldots,T, s\in \mathcal{S}\\
    \phi_s&\overset{\text{ind}}{\sim} N_K(\mu_{\phi_s},V_{\phi_s}) , \hspace{2.5cm} s\in \mathcal{S}\\
    D_s&\overset{\text{ind}}{\sim} \text{InvGam}(a_{D_s}/2,b_{D_s}/2) \hspace{1cm} s\in \mathcal{S}. 
\end{align*}
The prior on the parameters, especially the increment differences $\delta_{sg}$, needs to be carefully chosen, as the 
objects of interest,  the ATTs, are functions of these parameters. 
We recommend the following pre-training approach to set the priors. 
The idea is to use a random sample of the available data to train the prior hyperparameters, i.e., to 
estimate the hyperparameters, similar to an empirical Bayes approach. 
We randomly select $15\%$ of the units in each cohort-stratum cell for this purpose in the simulations and application below. 
Estimation and inference is then on the 
remaining data. 
\subsection{Posterior sampling}

A Gibbs sampler is used to efficiently sample the posterior distribution. 
Posterior inferences of objects of interest (ATTs) are based on the sample of draws produced by the algorithm. 
The posterior sample consists of 
\[
\theta^{(m)} = 
\left( 
\{ \eta_{1g}^{(m)},  g\in \mathcal{G}\}, 
\{ \xi_{sg}^{(m)},  s\in \mathcal{S},g\in \mathcal{G}\},
\{\Sigma_s^{(m)}, \phi_s^{(m)}, D_s^{(m)}: s\in \{1\}\cup \mathcal{S} \} \right), \ m=1,\ldots, M,
\]
where $M$ is the number of MCMC draws (beyond a suitable burn-in).
The algorithm sequentially samples from the conditional distributions  (see  Online Appendix  for details). 

\subsection{Frequentist validity of posterior inference}\label{sec:asymptotic_result}
We establish the frequentist validity of the posterior credible sets for the
ATT vector under a fixed-$T$, large-$n$ asymptotic framework. Let
$\theta=(\beta,\phi,\sigma^2,D)$ collect the model parameters and let
$\theta^\ast$ denote the true value. Define the ATT vector
\[
    \tau_{\mathrm{ATT}} = f(\theta)
    =
    \left\{ \tau_{\mathrm{ATT}}^{(s,g)}=L_s^{\text{post}}(\beta_{sg}-\beta_{1g})
    : s\in\mathcal S,\ g\in\mathcal G \right\}.
\]
Under the regularity conditions stated in Online Appendix,
the posterior distribution of $\theta$ satisfies a Bernstein--von Mises
theorem:
\[
    d_{\mathrm{TV}}\left[
    \Pi\left\{\sqrt n(\theta-\theta^\ast)\in\cdot\mid \mathrm{Data}_n\right\},
    N\left(\Delta_{n,\theta^\ast}, I_{\theta^\ast}^{-1}\right)
    \right]
    \overset{P_0}{\longrightarrow}0,
\]
where $I_{\theta^\ast}$ is the Fisher information matrix, $d_{\mathrm{TV}}$ is the total variation distance,  and
\[
    \Delta_{n,\theta^\ast}
    =
    \frac{1}{\sqrt n}
    \sum_{i=1}^n
    I_{\theta^\ast}^{-1}\ell_{\theta^\ast}(\mathrm{Data}_i),
\]
where $\ell_{\theta^\ast}(\cdot)$ is 
the individual score contribution.
Since $\tau_{\mathrm{ATT}}=f(\theta)$ is a smooth function of $\theta$,
the Bayesian delta method implies
\[
    \Pi\left\{
    \sqrt n(\tau_{\mathrm{ATT}}-\tau_{\mathrm{ATT}}^\ast)\in\cdot
    \mid \mathrm{Data}_n
    \right\}
    \rightsquigarrow
    N\left(
    \nabla f(\theta^\ast)' \Delta_{n,\theta^\ast},
    \nabla f(\theta^\ast)' I_{\theta^\ast}^{-1}\nabla f(\theta^\ast)
    \right),
\]
where $\tau_{\mathrm{ATT}}^\ast$ is the true ATT vector.  
Consequently, posterior credible sets for the ATT vector have asymptotically
correct frequentist coverage. See Online Appendix for proofs.

\section{Simulation and empirical application}\label{sec:simulation_application}
\subsection{Simulation}
We generate  data sets which, roughly speaking, mimic the real data that we use in the empirical application in Section \ref{sec:application}. The data set contains five periods (i.e. $T=5$), and units are treated in the second, fourth, and fifth periods (i.e. $\mathcal{S}=\{2,4,5\}$). 
Thus, there are 7 causal parameters of interest, for a given stratum. 
The $n$ units are randomly allocated into the 4 treatment cohorts $\{1,2,4,5\}$ with probabilities $(0.4,0.2,0.2,0.2)$. 
We generated $w_{i}\sim \text{Unif}(L,U)$ to roughly match the moments of the log population in the real data in Section \ref{sec:application}.  
We set the data-generating parameter values as follows. First, we fit our model with no stratification (i.e. $G=1$) on the real data in Section \ref{sec:application}. When generating data under $G=1,2,$ and $3$, we set the true parameter values as 
$\delta_{s1}=\hat{\delta}_s$, 
$\delta_{s2}=c_2\hat{\delta}_s$, and 
$\delta_{s3}=c_3\hat{\delta}_s$, where 
$\hat{\delta}_s$ is the estimated value of $\delta_s$ above, and 
$c_2>1$ and $c_3<1$ are constants. 
The strata are defined based on  partitioning the interval $(L,U)$ into 2 and 3 sub-intervals of equal length under $G=2$ and $G=3$, respectively.

\subsubsection{Simulation 1: finite-sample advantages}
The goal of this simulation study is to demonstrate the finite-sample advantages of our proposed model when estimating strata-specific ATTs. We generate data based on three strata ($G=3$). Hence, there are 21 ATTs ($=7\times 3$). 

We compare three approaches: 
(1) our approach (BDID), 
(2) \cite{CallawaySantAnna2021did} applied separately on each stratum (CS21-Split), and 
(3) CS21 applied ignoring the strata (CS21-Pooled).  We use the R package, \texttt{did}, for CS21. 

For each approach, we report Bias, root-mean-squared-errors (RMSE), and empirical coverage of the 95\% confidence and credible intervals (Cov).
For CS-Pooled,  the same ATT estimates are used across all strata.

Table \ref{tab:sim_comparison_G3_prePT0_N100_R300} summarizes the results. Not surprisingly, CS21-Pooled performs substantially worse than the other two methods because it ignores treatment-effect heterogeneity across strata.

In contrast, the comparison between BDID and CS Split highlights the gains from the joint estimation across strata embedded in our joint estimation framework. Since CS21-Split estimates each stratum separately, it relies on  smaller effective sample sizes within each subgroup. BDID instead borrows strength across strata through the joint estimation of the full ATT array. As a result, BDID delivers substantially improved interval coverage and generally smaller bias in most ATT parameters, particularly for the larger treatment effects in Stratum 2.

Although CS Split occasionally attains slightly smaller RMSE values for some ATT parameters, this comes at the cost of severe undercoverage, with empirical coverage rates often far below the nominal 95\% level. In practice, this indicates that researchers can be overconfident about the strength of the effects even when the effective sample sizes for strata are too small to justify the underlying asymptotic approximation of the distribution of the estimator. Importantly, the finer the strata are, fewer observations there are to estimate strata-specific treatment effects. 

In contrast, BDID maintains coverage rates close to the nominal level across nearly all ATT parameters while keeping RMSE competitive. These results illustrate the finite-sample benefits of joint estimation and information sharing across strata in the proposed approach.

As expected, when $n$ becomes sufficiently large, asymptotic approximations become accurate within each stratum, and the performances of CS21-Split and BDID become increasingly similar. Additional simulation results for larger sample sizes are reported in the Online Appendix.

\FloatBarrier
\begin{table}[!htbp]
\centering
\caption{Simulation comparison: $G=3$, $n=100$}
\label{tab:sim_comparison_G3_prePT0_N100_R300}
\begin{threeparttable}
\resizebox{\textwidth}{!}{%
\begin{tabular}{lrrrrrrrrrr}
\hline\hline
ATT & True & BDID Bias & BDID RMSE & BDID Cov. & CS Split Bias & CS Split RMSE & CS Split Cov. & CS Pooled Bias & CS Pooled RMSE & CS Pooled Cov. \\
\hline
\multicolumn{11}{l}{\textit{Panel A: Individual ATT comparison}} \\
\hline
$\tau_{\mathrm{ATT}}(2,2; 1)$& -0.0209 & -0.0059 & 0.1200 & 0.960 & 0.0447 & 0.1075 & 0.849 & -0.0633 & 0.0888 & 0.770 \\
$\tau_{\mathrm{ATT}}(2,3; 1)$& -0.0810 & -0.0005 & 0.1152 & 0.963 & 0.0917 & 0.1277 & 0.736 & -0.2295 & 0.2538 & 0.390 \\
$\tau_{\mathrm{ATT}}(2,4; 1)$& -0.1434 & 0.0048 & 0.1231 & 0.963 & 0.0984 & 0.1457 & 0.756 & -0.4007 & 0.4351 & 0.263 \\
$\tau_{\mathrm{ATT}}(2,5; 1)$& -0.1074 & -0.0033 & 0.1264 & 0.957 & -0.0053 & 0.1069 & 0.920 & -0.3016 & 0.3279 & 0.317 \\
$\tau_{\mathrm{ATT}}(4,4; 1)$& 0.0035 & -0.0040 & 0.0884 & 0.997 & 0.0351 & 0.0807 & 0.860 & 0.0080 & 0.0450 & 0.947 \\
$\tau_{\mathrm{ATT}}(4,5; 1)$& -0.0350 & 0.0032 & 0.0958 & 0.990 & -0.0053 & 0.0800 & 0.920 & -0.0949 & 0.1126 & 0.593 \\
$\tau_{\mathrm{ATT}}(5,5; 1)$& -0.0271 & -0.0064 & 0.1448 & 0.967 & -0.0011 & 0.1253 & 0.877 & -0.0687 & 0.1008 & 0.813 \\
$\tau_{\mathrm{ATT}}(2,2; 2)$& -0.2091 & -0.0233 & 0.1363 & 0.940 & -0.0251 & 0.1053 & 0.870 & 0.1249 & 0.1395 & 0.453 \\
$\tau_{\mathrm{ATT}}(2,3; 2)$& -0.8104 & 0.0074 & 0.1447 & 0.970 & 0.0563 & 0.1171 & 0.840 & 0.4998 & 0.5115 & 0.003 \\
$\tau_{\mathrm{ATT}}(2,4; 2)$& -1.4337 & 0.0337 & 0.2090 & 0.953 & 0.0851 & 0.1337 & 0.783 & 0.8897 & 0.9057 & 0.000 \\
$\tau_{\mathrm{ATT}}(2,5; 2)$& -1.0738 & 0.0246 & 0.1817 & 0.943 & -0.0094 & 0.1188 & 0.880 & 0.6649 & 0.6772 & 0.000 \\
$\tau_{\mathrm{ATT}}(4,4; 2)$& 0.0349 & 0.0183 & 0.0986 & 0.970 & 0.1273 & 0.1482 & 0.607 & -0.0235 & 0.0502 & 0.900 \\
$\tau_{\mathrm{ATT}}(4,5; 2)$& -0.3503 & 0.0259 & 0.1116 & 0.960 & -0.0005 & 0.0857 & 0.907 & 0.2203 & 0.2285 & 0.040 \\
$\tau_{\mathrm{ATT}}(5,5; 2)$& -0.2713 & 0.0075 & 0.1543 & 0.963 & 0.0102 & 0.1266 & 0.893 & 0.1755 & 0.1903 & 0.297 \\
$\tau_{\mathrm{ATT}}(2,2; 3)$ & -0.0021 & -0.0189 & 0.1269 & 0.953 & 0.0421 & 0.1010 & 0.833 & -0.0821 & 0.1030 & 0.717 \\
$\tau_{\mathrm{ATT}}(2,3; 3)$ & -0.0081 & -0.0127 & 0.1267 & 0.953 & 0.0939 & 0.1306 & 0.689 & -0.3024 & 0.3213 & 0.133 \\
$\tau_{\mathrm{ATT}}(2,4; 3)$ & -0.0143 & -0.0130 & 0.1272 & 0.967 & 0.0931 & 0.1391 & 0.773 & -0.5297 & 0.5562 & 0.077 \\
$\tau_{\mathrm{ATT}}(2,5; 3)$ & -0.0107 & -0.0074 & 0.1354 & 0.937 & -0.0014 & 0.1090 & 0.900 & -0.3982 & 0.4185 & 0.087 \\
$\tau_{\mathrm{ATT}}(4,4; 3)$ & 0.0003 & -0.0010 & 0.0935 & 0.993 & 0.0349 & 0.0812 & 0.893 & 0.0111 & 0.0457 & 0.940 \\
$\tau_{\mathrm{ATT}}(4,5; 3)$ & -0.0035 & 0.0017 & 0.1099 & 0.993 & 0.0009 & 0.0967 & 0.867 & -0.1265 & 0.1402 & 0.407 \\
$\tau_{\mathrm{ATT}}(5,5; 3)$ & -0.0027 & -0.0004 & 0.1519 & 0.953 & -0.0041 & 0.1301 & 0.887 & -0.0932 & 0.1188 & 0.730 \\
\hline
\multicolumn{11}{l}{\textit{Panel B: Overall comparison}} \\
\hline
Method & -- & Mean Bias & Max $|$Bias$|$ & RMSE & SD & Coverage & -- & -- & -- & -- \\
BDID & -- & 0.0014 & 0.0337 & 0.1296 & 0.1467 & 0.964 & -- & -- & -- & -- \\
CS split & -- & 0.0363 & 0.1273 & 0.1141 & 0.0996 & 0.835 & -- & -- & -- & -- \\
CS pooled & -- & -0.0057 & 0.8897 & 0.2748 & 0.0927 & 0.423 & -- & -- & -- & -- \\
\hline\hline
\end{tabular}}
\begin{tablenotes}[flushleft]
\footnotesize
\item \textit{Notes:} BDID denotes the proposed estimator. CS Split and CS Pooled denote comparison estimators. 
\item Bias, RMSE, and coverage are computed over 300 simulation replications. 
\item Coverage refers to nominal 95\% credible/confidence interval coverage.
\item We generate data (and estimate the model for BDID) without  parallel trends condition in pre-treatment periods.
\end{tablenotes}
\end{threeparttable}
\end{table}

\FloatBarrier

\subsubsection{Simulation 2: asymptotic behavior}
The goal of this simulation is to confirm the large sample properties of posterior inference of our proposed approach given in  Section \ref{sec:asymptotic_result}. We generate data with two strata and $n\in\{250, 500, 1000\}$. 

Table \ref{tab:ATT_combined_G2_prePT0} summarizes the finite-sample performance of posterior inference under the proposed BDID framework. The results strongly support the asymptotic theory developed in Section \ref{sec:asymptotic_result}.

First, the posterior means exhibit negligible bias across all ATT parameters and sample sizes. The magnitude of the biases decreases further as $n$ increases, which is consistent with posterior consistency implied by the Bernstein--von Mises theorem.

Second, both RMSE and posterior standard deviations decrease steadily as the sample size increases. Importantly, the posterior standard deviations closely track the empirical RMSE values across all ATT parameters and sample sizes, indicating that the posterior distribution provides an accurate approximation to the true finite-sample sampling uncertainty.

Most importantly, the empirical coverage probabilities of the nominal 95\% posterior credible intervals remain close to the nominal level across all sample sizes. The overall coverage rates are 0.968, 0.958, and 0.955 for $n=250$, $500$, and $1000$, respectively. These findings provide strong empirical support for the asymptotic validity of posterior credible sets established in Section \ref{sec:asymptotic_result}.

Overall, the simulation results indicate that posterior inference under the proposed BDID framework performs reliably in finite samples 
and provides uncertainty quantification for ATTs with a valid frequentist interpretation.

\FloatBarrier
\begin{table}[htbp]
\centering
\caption{ATT estimation results across sample sizes: $G=2$
}
\label{tab:ATT_combined_G2_prePT0}
\normalsize
\begin{threeparttable}
\resizebox{\textwidth}{!}{%
\begin{tabular}{lrrrrrrrrrrrr}
\toprule
 & \multicolumn{3}{c}{Bias} & \multicolumn{3}{c}{RMSE} & \multicolumn{3}{c}{Post. SD} & \multicolumn{3}{c}{Coverage} \\
\cmidrule(lr){2-4} \cmidrule(lr){5-7} \cmidrule(lr){8-10} \cmidrule(lr){11-13}
ATT & $n=250$ & $n=500$ & $n=1000$ & $n=250$ & $n=500$ & $n=1000$ & $n=250$ & $n=500$ & $n=1000$ & $n=250$ & $n=500$ & $n=1000$ \\
\midrule
$\tau_{\mathrm{ATT}}(2,2; 1)$& -0.002 & 0.001 & -0.001 & 0.052 & 0.038 & 0.026 & 0.057 & 0.038 & 0.027 & 0.959 & 0.954 & 0.957 \\
$\tau_{\mathrm{ATT}}(2,3; 1)$& -0.001 & 0.002 & -0.001 & 0.047 & 0.034 & 0.024 & 0.053 & 0.036 & 0.025 & 0.974 & 0.954 & 0.956 \\
$\tau_{\mathrm{ATT}}(2,4; 1)$& -0.002 & 0.002 & -0.000 & 0.055 & 0.039 & 0.028 & 0.059 & 0.040 & 0.028 & 0.965 & 0.951 & 0.942 \\
$\tau_{\mathrm{ATT}}(2,5; 1)$& -0.002 & 0.001 & -0.001 & 0.059 & 0.040 & 0.028 & 0.060 & 0.041 & 0.028 & 0.957 & 0.953 & 0.949 \\
$\tau_{\mathrm{ATT}}(4,4; 1)$& 0.003 & 0.001 & 0.001 & 0.043 & 0.030 & 0.022 & 0.051 & 0.033 & 0.022 & 0.979 & 0.968 & 0.964 \\
$\tau_{\mathrm{ATT}}(4,5; 1)$& 0.002 & -0.002 & 0.000 & 0.044 & 0.033 & 0.023 & 0.053 & 0.035 & 0.024 & 0.986 & 0.964 & 0.950 \\
$\tau_{\mathrm{ATT}}(5,5; 1)$& 0.001 & 0.000 & 0.001 & 0.068 & 0.048 & 0.033 & 0.069 & 0.047 & 0.033 & 0.960 & 0.945 & 0.955 \\
$\tau_{\mathrm{ATT}}(2,2; 2)$& -0.002 & 0.000 & 0.000 & 0.053 & 0.036 & 0.027 & 0.057 & 0.039 & 0.027 & 0.963 & 0.963 & 0.951 \\
$\tau_{\mathrm{ATT}}(2,3; 2)$& 0.001 & 0.000 & -0.000 & 0.048 & 0.033 & 0.024 & 0.053 & 0.036 & 0.025 & 0.966 & 0.962 & 0.963 \\
$\tau_{\mathrm{ATT}}(2,4; 2)$& 0.001 & 0.001 & 0.000 & 0.055 & 0.037 & 0.027 & 0.059 & 0.040 & 0.028 & 0.965 & 0.965 & 0.946 \\
$\tau_{\mathrm{ATT}}(2,5; 2)$& 0.003 & 0.002 & 0.000 & 0.057 & 0.040 & 0.028 & 0.060 & 0.041 & 0.029 & 0.961 & 0.952 & 0.952 \\
$\tau_{\mathrm{ATT}}(4,4; 2)$& 0.001 & 0.002 & 0.001 & 0.043 & 0.030 & 0.021 & 0.051 & 0.033 & 0.023 & 0.981 & 0.968 & 0.965 \\
$\tau_{\mathrm{ATT}}(4,5; 2)$& 0.004 & 0.002 & 0.001 & 0.046 & 0.033 & 0.022 & 0.053 & 0.035 & 0.023 & 0.974 & 0.954 & 0.964 \\
$\tau_{\mathrm{ATT}}(5,5; 2)$& 0.004 & 0.001 & 0.001 & 0.065 & 0.046 & 0.032 & 0.069 & 0.047 & 0.033 & 0.964 & 0.958 & 0.952 \\
\midrule
OVERALL & 0.001 & 0.001 & 0.000 & 0.052 & 0.037 & 0.026 & 0.057 & 0.039 & 0.027 & 0.968 & 0.958 & 0.955 \\
\bottomrule
\end{tabular}
}
\begin{tablenotes}[flushleft]
\footnotesize
\item Notes: The table reports bias, RMSE, average posterior standard deviation, 
\item and empirical coverage of 95\% credible intervals across 1000 Monte Carlo repetitions.
\item We generate data and estimate the model without  parallel trends condition in pre-treatment periods.
\end{tablenotes}
\end{threeparttable}
\end{table}
\FloatBarrier

\subsubsection{Simulation 3: marginal likelihood and specification selection}
In Online Appendix, we report additional simulation evidence on the use of marginal likelihood comparison for specification assessment. The simulations examine both the ability of the marginal likelihood to distinguish models with and without pre-treatment parallel-trends restrictions conditional on a known structure of stratification, and its performance in comparing both the pre-trend restrictions as well as alternative stratification structures indexed by $G$. The results show that the marginal likelihood appropriately penalizes unnecessarily rich heterogeneity specifications while favoring more flexible models when genuine subgroup heterogeneity is present in the data.

\subsection{Real data application: strata-specific ATTs of minimum wage policy on employment}\label{sec:application}
We illustrate the proposed approach on real data on  minimum wage policy and teen employment in the U.S. that we obtained from \cite{CallawaySantAnna2021did}\footnote{\url{https://bcallaway11.github.io/did/}}. This is a subset of the dataset analyzed in their paper. During the period between 2001 and 2007, the federal minimum wage was flat at \$5.15 per hour.  They focus on county-level teen employment in states where the minimum wage was equal to the federal minimum wage at the beginning of the period. Some of these states increased their minimum wage over this period. The counties in these states are the treated units. In particular, the treatment cohorts are defined by the time periods when the states first increased their minimum wage. Other states did not increase their minimum wage and the counties in these states are  never-treated. The data includes 500 counties (i.e.\ $n=500$): 309 counties were never-treated, 20 counties were treated in 2004, 40 were treated in 2006, and 131 were treated in 2007. The outcome variable is the log of county-level teen employment.  We use log population as a pre-treatment baseline covariate (i.e.\ $w_i$). The data spans from 2003 to 2007 (i.e.\ $T=5$). 
We construct the prior from 15\% of the data. This data is randomly selected from each cohort-stratum cell. The remaining 85\% of the data are used for estimation.

\begin{table}[htbp]
\centering
\caption{Distribution across cohorts and strata}
\label{tab:application_cohort_strata_dist}

\begin{threeparttable}
\begin{tabular}{c cc cc ccc}
\hline\hline
& \multicolumn{2}{c}{$g=1$ (small counties)} & \multicolumn{2}{c}{$g=2$ (large counties)} 
& \multicolumn{3}{c}{Log population} \\
\cline{2-3} \cline{4-5} \cline{6-8}
Cohort ($s$) & Counts & Percent & Counts & Percent 
& Min & Mean & Max \\
\hline
1 & 170 & 34.0 & 139 & 27.8 & 0.1765 & 3.1834 & 7.7048 \\
2 & 10  & 2.0  & 10  & 2.0  & 1.8060 & 3.4764 & 6.4683 \\
4 & 14  & 2.8  & 26  & 5.2  & 1.8388 & 3.7541 & 7.0310 \\
5 & 56  & 11.2 & 75  & 15.0 & 0.0658 & 3.4586 & 7.2399 \\
\hline
Total & 250 & 50.0 & 250 & 50.0 & -- & -- & -- \\
\hline\hline
\end{tabular}

\begin{tablenotes}
\footnotesize
\item Notes: Percent indicates the share of the full sample in each cohort--group cell. 
The threshold on log population is 3.2578.
\end{tablenotes}

\end{threeparttable}
\end{table}

When making policy recommendations, having information on heterogeneous treatment effects is important. In the context of the current application, it would be useful to learn how the effects of the minimum wage policy  differ between small and large counties. We hence consider partitioning the space of log population into two (i.e. $G=2$) so that  the numbers of counties in each cohort are roughly comparable  between strata. The first stratum consists of small counties and the second consists of large counties. 
See Table \ref{tab:application_cohort_strata_dist} for the structure of strata. 
Importantly, some of the cohort-stratum cells are sparse, casting doubt on the validity of the conventional asymptotic approximation of the distribution of the ATT estimator in the relevant cells (see Simulation 1). For example, the strata in Cohort 2 both have only 10 observations.

\FloatBarrier
\begin{figure}[htbp]
\centering

\begin{subfigure}[t]{0.32\textwidth}
    \centering
    \includegraphics[width=\textwidth]{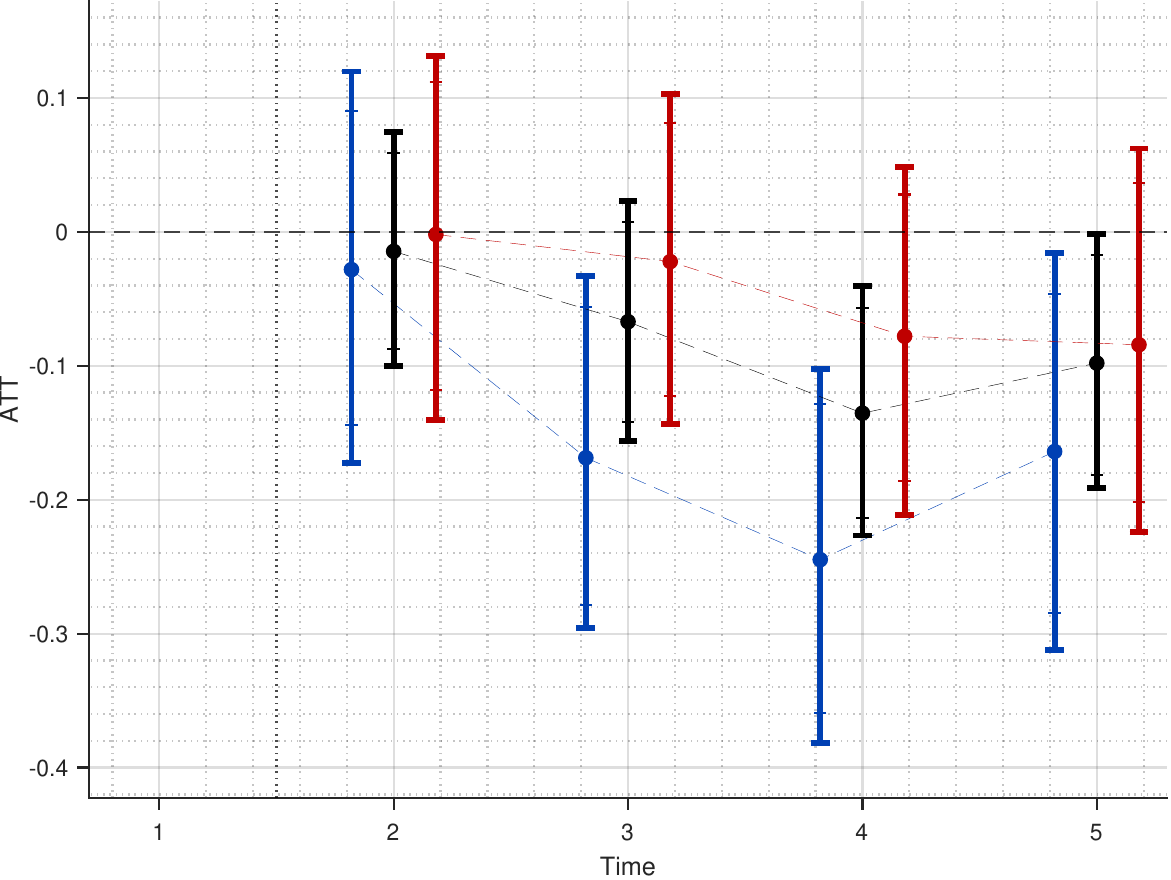}
    \caption{2004 Cohort ($s=2$)}
\end{subfigure}
\hfill
\begin{subfigure}[t]{0.32\textwidth}
    \centering
    \includegraphics[width=\textwidth]{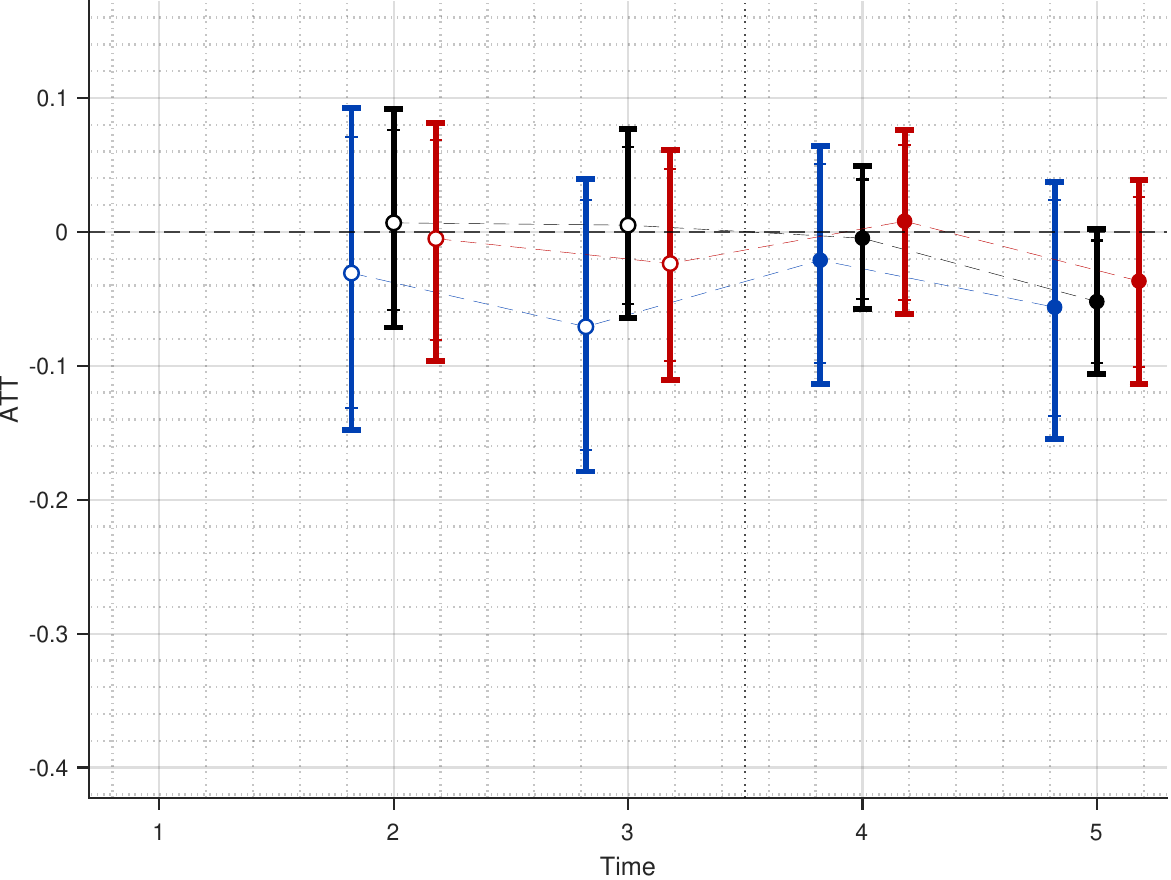}
    \caption{2006 Cohort ($s=4$)}
\end{subfigure}
\hfill
\begin{subfigure}[t]{0.32\textwidth}
    \centering
    \includegraphics[width=\textwidth]{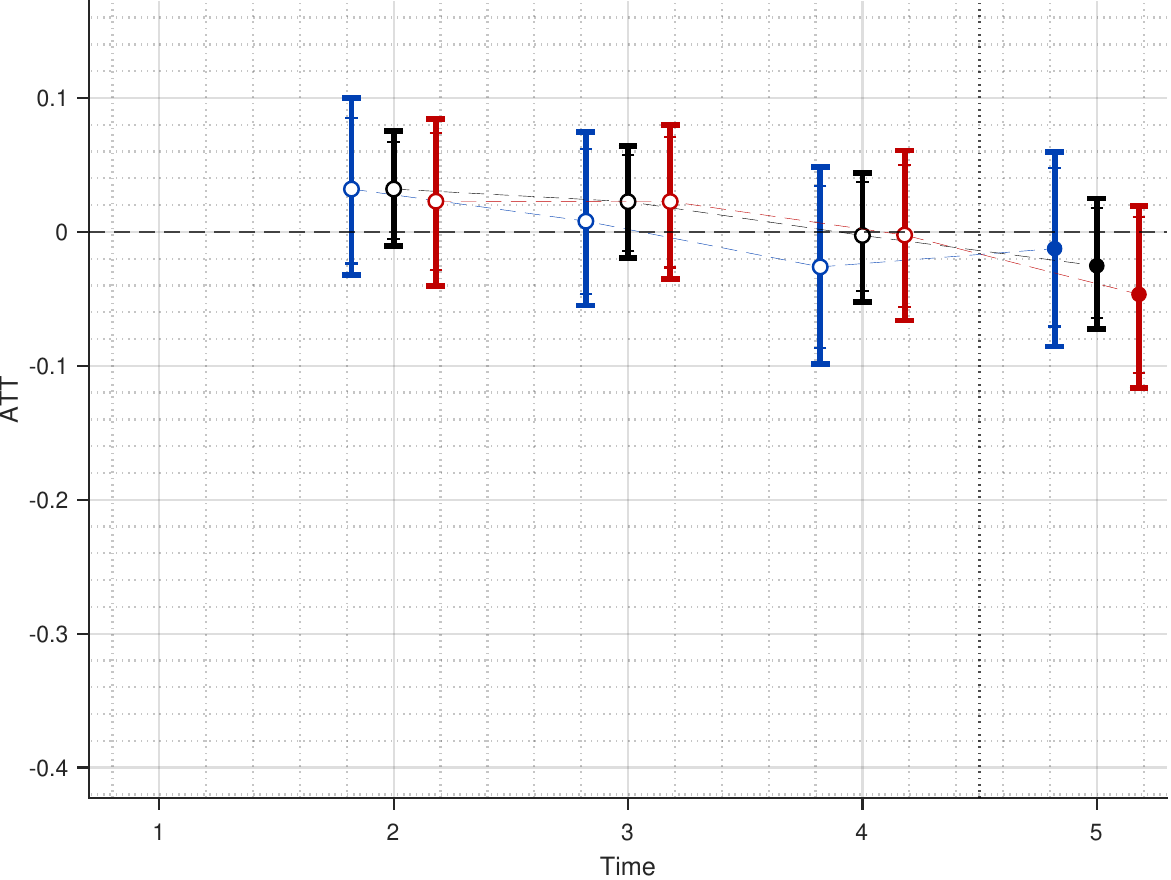}
    \caption{2007 Cohort ($s=5$)}
\end{subfigure}

\caption{
Posterior ATT estimates under $G=2$ and prePT$=0$.
Blue = stratum 1, red = stratum 2, black = $G=1$ benchmark.
Thick intervals are 95\% credible intervals and thin intervals are 90\% credible intervals.
Filled markers denote post-treatment effects and open markers denote pseudo-ATTs.
}
\label{fig:application_att_combined}

\end{figure}
\FloatBarrier

Figure \ref{fig:application_att_combined} visualizes  the estimated ATTs. 
With no stratification (black), there is a lagged effect in the 2004 cohort: no significant effect for the first two post periods, but 13.5\% and 9.8\% reduction in teen employment in the last two periods. These observations are  consistent with what one finds using CS21 in a usual fashion (i.e. without splitting  the data). 

Now, under the stratification with two clusters  ($G=2$), we see a  different story. For small counties ($g=1$), we found substantially larger effects for the 2004 cohort in the last two periods; 24.5\% and 16.4\%, respectively. On the other hand, we did not find significant effects among larger counties ($g=2$).  Hence, the strata-specific effects give an important policy implication that could have been ignored when focusing on the aggregate effects: the minimum wage increase can impact smaller counties more than larger counties. 

Table \ref{tab:application_all4models}
gives details of the estimation results. 
The log marginal likelihood under two strata was -65.6, and it was -48.7  when we impose the parallel trends in pre-periods, which supports our confidence in the parallel trend assumption. The posterior SD tends to be smaller under this restriction, reflecting the smaller effective number of parameters.


From the perspective of marginal likelihood comparison, the specification with $G=1$ and \texttt{prePT}=1 receives the strongest support from the data. Nevertheless, the $G=2$ specification uncovers substantively important heterogeneity between small and large counties that may be relevant for policy analysis. This illustrates that model comparison can inform, but need not fully determine, the degree of heterogeneity retained in empirical applications.

\begin{table}[htbp]
\centering
\caption{BDID estimates across strata and pre-trend specifications}
\label{tab:application_all4models}
\scriptsize
\begin{threeparttable}
\resizebox{\textwidth}{!}{%
\begin{tabular}{lrrrrrrrrrrrrrrrr}
\hline\hline
ATT & \multicolumn{4}{c}{$G=2$, prePT=0} & \multicolumn{4}{c}{$G=2$, prePT=1} & \multicolumn{4}{c}{$G=1$, prePT=0} & \multicolumn{4}{c}{$G=1$, prePT=1} \\
 & Mean & SD & LB & UB & Mean & SD & LB & UB & Mean & SD & LB & UB & Mean & SD & LB & UB \\
\hline
$\tau_{\mathrm{ATT}}(2,2; 1)$& -0.028 & 0.072 & -0.173 & 0.120 & -0.035 & 0.068 & -0.163 & 0.106 & -0.015 & 0.045 & -0.100 & 0.074 & -0.024 & 0.050 & -0.123 & 0.077 \\
$\tau_{\mathrm{ATT}}(2,3; 1)$& -0.169 & 0.068 & -0.296 & -0.033 & -0.167 & 0.061 & -0.281 & -0.045 & -0.067 & 0.046 & -0.156 & 0.023 & -0.073 & 0.045 & -0.161 & 0.015 \\
$\tau_{\mathrm{ATT}}(2,4; 1)$& -0.245 & 0.069 & -0.382 & -0.102 & -0.230 & 0.063 & -0.351 & -0.104 & -0.135 & 0.048 & -0.227 & -0.040 & -0.129 & 0.047 & -0.217 & -0.034 \\
$\tau_{\mathrm{ATT}}(2,5; 1)$& -0.164 & 0.073 & -0.312 & -0.016 & -0.160 & 0.068 & -0.293 & -0.029 & -0.098 & 0.049 & -0.191 & -0.001 & -0.102 & 0.050 & -0.197 & -0.003 \\
$\tau_{\mathrm{ATT}}(4,4; 1)$& -0.021 & 0.045 & -0.113 & 0.064 & -0.029 & 0.033 & -0.095 & 0.034 & -0.005 & 0.027 & -0.057 & 0.049 & 0.006 & 0.020 & -0.032 & 0.045 \\
$\tau_{\mathrm{ATT}}(4,5; 1)$& -0.056 & 0.050 & -0.155 & 0.037 & -0.074 & 0.039 & -0.148 & 0.003 & -0.052 & 0.028 & -0.106 & 0.002 & -0.050 & 0.023 & -0.094 & -0.003 \\
$\tau_{\mathrm{ATT}}(5,5; 1)$& -0.012 & 0.037 & -0.086 & 0.060 & -0.043 & 0.029 & -0.101 & 0.015 & -0.025 & 0.025 & -0.072 & 0.025 & -0.044 & 0.020 & -0.084 & -0.003 \\
$\tau_{\mathrm{ATT}}(2,2; 2)$& -0.002 & 0.071 & -0.141 & 0.131 & -0.007 & 0.065 & -0.131 & 0.122 & -- & -- & -- & -- & -- & -- & -- & -- \\
$\tau_{\mathrm{ATT}}(2,3; 2)$& -0.022 & 0.063 & -0.144 & 0.103 & -0.030 & 0.059 & -0.147 & 0.089 & -- & -- & -- & -- & -- & -- & -- & -- \\
$\tau_{\mathrm{ATT}}(2,4; 2)$& -0.078 & 0.066 & -0.211 & 0.049 & -0.067 & 0.061 & -0.186 & 0.057 & -- & -- & -- & -- & -- & -- & -- & -- \\
$\tau_{\mathrm{ATT}}(2,5; 2)$& -0.084 & 0.074 & -0.224 & 0.062 & -0.088 & 0.069 & -0.229 & 0.045 & -- & -- & -- & -- & -- & -- & -- & -- \\
$\tau_{\mathrm{ATT}}(4,4; 2)$& 0.008 & 0.036 & -0.061 & 0.076 & 0.009 & 0.026 & -0.040 & 0.061 & -- & -- & -- & -- & -- & -- & -- & -- \\
$\tau_{\mathrm{ATT}}(4,5; 2)$& -0.037 & 0.039 & -0.114 & 0.039 & -0.047 & 0.032 & -0.110 & 0.014 & -- & -- & -- & -- & -- & -- & -- & -- \\
$\tau_{\mathrm{ATT}}(5,5; 2)$& -0.047 & 0.034 & -0.116 & 0.019 & -0.064 & 0.027 & -0.116 & -0.009 & -- & -- & -- & -- & -- & -- & -- & -- \\
\hline
Log ML & \multicolumn{4}{c}{-65.651} & \multicolumn{4}{c}{-48.716} & \multicolumn{4}{c}{-54.171} & \multicolumn{4}{c}{-41.852} \\
\hline\hline
\end{tabular}}
\begin{tablenotes}[flushleft]
\footnotesize
\item Notes: Posterior means, posterior standard deviations, and 95\% credible intervals from BDID estimation.
\item Columns compare four specifications: two strata choices, $G=1,2$, and two pre-trend restrictions, prePT$=0,1$.
\item Log ML denotes the estimated log marginal likelihood.
\end{tablenotes}
\end{threeparttable}
\end{table}
\FloatBarrier

\section{Conclusion}\label{sec:conclusion}
This paper develops a Bayesian framework for estimating heterogeneous treatment effects in staggered Difference-in-Differences designs with covariate-defined strata. By jointly modeling cohort-, time-, and stratum-specific ATT parameters within a unified likelihood-based framework, the approach stabilizes inference in settings with sparse cohort-by-time-by-stratum cells while preserving the estimand structure of modern staggered DiD methods. 

The framework also provides a probabilistic approach to assessing the validity of the parallel trends condition in pre-treatment periods through marginal likelihood comparison. Rather than conditioning inference on conventional pre-trend tests, the proposed approach incorporates uncertainty regarding pre-treatment restrictions directly into posterior inference. The same framework can be used to assess how much subgroup heterogeneity is supported by the data. 

We establish a Bernstein--von Mises theorem for the ATT array, implying that posterior credible sets possess asymptotically valid frequentist coverage. Simulation studies demonstrate substantial finite-sample improvements relative to existing approaches applied separately within strata, particularly in settings with limited subgroup sample sizes. In the empirical application, the framework uncovers important heterogeneity in the employment effects of minimum wage increases across small and large counties that is obscured in aggregate analyses. 

Several extensions are possible. Future work could study fully data-driven stratification rules, more flexible outcome dynamics, and semiparametric or nonparametric specifications for the outcome distribution. More broadly, the framework illustrates how Bayesian joint modeling can provide a useful complement to modern Difference-in-Differences methods in empirically relevant high-dimensional heterogeneous treatment effect settings.


\vspace{1cm}
\begin{spacing}{1.05}
\bibliographystyle{apacite}

\bibliography{references_did}
\end{spacing}

\begin{center}
{\large\bf Acknowledgments}
\end{center}
We benefited from helpful comments from  
Matt Webb, 
Jun Zhao, 
Irene Botosaru, 
and  
Gonzalo Vazquez-Bare.

\clearpage

\renewcommand{\thesection}{A\arabic{section}}
\renewcommand{\theequation}{A\arabic{equation}}
\renewcommand{\thetable}{A\arabic{table}}
\renewcommand{\thefigure}{A\arabic{figure}}
\setcounter{section}{0}
\setcounter{equation}{0}
\setcounter{table}{0}
\setcounter{figure}{0}

\setcounter{section}{0}
\setcounter{equation}{0}
\setcounter{table}{0}
\setcounter{figure}{0}
\renewcommand{\thesection}{S\arabic{section}}
\renewcommand{\theequation}{S\arabic{equation}}
\renewcommand{\thetable}{S\arabic{table}}
\renewcommand{\thefigure}{S\arabic{figure}}

\appendix 
\begin{center}
{\bf\Large Online Appendix}\\[1ex]
\end{center}
\onehalfspacing
Section \ref{sec:mcmcsteps} describes the MCMC steps and 
Section \ref{sec:prePT_appdix} illustrates the model under the restriction that the parallel trends hold in pre-treatment periods. 
Section \ref{sec:theory} presents the proofs for the asymptotic justification of posterior inference. 
Section \ref{sec:additional_simulation} provides additional simulation results. 
\section{MCMC steps}\label{sec:mcmcsteps}
\vspace*{0.75\baselineskip}
\subsection{Gibbs sampling steps}
\subsubsection{Sample $\eta_1$ and $\xi$ from the marginalized model}
For $i\in N_{sg}$, define the residuals from the marginalized model:   
\[
\tilde{e}_i(\beta_1,\delta,\phi) = y_i  -1_T w_i'\phi_s - L_T  \beta_{1g} -1(s\ne 1) L_T  \delta_{sg}
\]
and let $\tilde{e}_{it}(\beta_1,\delta,\phi) $ be its $t$th element. 
\paragraph{Update $\eta_{1g}$}
\[
\eta_{1g} \vert \bullet \sim N_{T-1}(m_{1g},B_{1g}^{-1}),
\]
where 
\begin{align*}
    B_{1g}&=V_{\eta_{1g}}^{-1} + \sum_{s\in \{1\}\cup\mathcal{S}} 
    n_{sg} R_0' L_T' \Lambda_s^{-1} L_T R_0\\
    m_{1g}&=B_{1g}^{-1}\left(V_{\eta_{1g}}^{-1} \mu_{\eta_{1g}} +\sum_{s\in \{1\}\cup\mathcal{S}}  \sum_{i\in N_{sg}} R_0' L_T'\Lambda_s^{-1} \tilde{e}_i(\beta_1=0,\delta,\phi)\right)
\end{align*}
\paragraph{Update $\xi_{sg}$}
\[
\xi_{sg} \vert \bullet \sim N_{T-1}(m_{sg},B_{sg}^{-1}),
\]
where 
\begin{align*}
    B_{sg}&=V_{\xi_{sg}}^{-1} +  n_{sg} R_0'L_T' \Lambda_s^{-1} L_TR_0\\
    m_{sg}&=B_{sg}^{-1}\left(V_{\xi_{sg}}^{-1} \mu_{\xi_{sg}} +  \sum_{i\in N_{sg}} R_0'L_T'\Lambda_s^{-1} \tilde{e}_i(\beta_1,\delta=0,\phi)\right)
\end{align*}
\subsubsection{Sample $\alpha$ and $\sigma^2$ from the conditional model}
For $i\in N_{sg}$, define the residuals from the conditional model:  
\[
e_i(\beta_1,\delta,\alpha) = y_i  -1_T \alpha_i - L_T  \beta_{1g} -1(s\ne 1) L_T  \delta_{sg}
\]
and let $e_{it}(\beta_1,\delta,\alpha) $ be its $t$th element. 
\paragraph{Update $\alpha_i$}
\[
\alpha_i \vert \bullet \sim N(m_i,B_i^{-1}),
\]
where, for $s=S_i$,
\begin{align*}
    B_{i}&=D_s^{-1} +  1_T' \Sigma_{s}^{-1} 1_T\\
    m_{i}&=B_{i}^{-1}\bigg(D_s^{-1} (w_i'\phi_s) +  1_T'\Sigma_{s}^{-1} e_i(\beta_1,\delta,\alpha=0)\bigg)
\end{align*}
\paragraph{Update $\sigma^2_{st}$}
\[
\sigma^2_{st} \vert \bullet \sim \text{InvGam}(\bar{a}_{st}/2, \bar{b}_{st}/2),
\]
where 
\begin{align*}
    \bar{a}_{st}&=a_{st} + \sum_g n_{sg}=a_{st} +  n_{s}\\
    \bar{b}_{st}&=b_{st} + \sum_g \sum_{i\in N_{sg}}  e^2_{it}(\eta_1,\xi,\alpha)
\end{align*}
\subsubsection{Sample RE parameters}

\paragraph{Update $\phi_s$}
\[
\phi_s \vert \bullet \sim N_K(m_{\phi_s},B_{\phi_s}^{-1}),
\]
where 
\begin{align*}
     B_{\phi_s}&=V_{\phi_s}^{-1} + D_s^{-1}\sum_{i:S_i=s} w_i w_i',\\
     m_{\phi_s}&=B_{\phi_s}^{-1}
     \left(
     V_{\phi_s}^{-1} \mu_{\phi_s}      
     + 
     D_s^{-1}
     \sum_{i:S_i=s}
     w_i  \alpha_i 
     \right)
\end{align*}
\paragraph{Update $D_s$}
\[
D_s \vert \bullet \sim \text{InvGam}(\bar{a}_{D_s}/2, \bar{b}_{D_s}/2),
\]
where 
\begin{align*}
    \bar{a}_{D_s}&=a_{D_s} + n_s\\
    \bar{b}_{D_s}&=b_{D_s} +  \sum_{i:S_i=s}  (\alpha_i - w_i'\phi_s)^2
\end{align*}

\section{Parallel trends imposed on the pre-periods}\label{sec:prePT_appdix}
We call this restriction `pre-PT.' In this case, we still have, for each stratum $g\in\mathcal{G}$,
for $i\in N_{1g}$, 
\begin{align*}
    y_i
    &=1_T \alpha_i + L_T R_0 \eta_{1g} + \varepsilon_{i}\\
    &=1_T \alpha_i + L_T \beta_{1g} + \varepsilon_{i}, \quad \varepsilon_{i}\sim N(0,\Sigma_1)  
\end{align*}
and for $i \in N_{sg}, s\in \mathcal{S}$,
\begin{align*}
    y_i
    &=1_T \alpha_i  + L_T R_0 \eta_{1g} + L_TR_0\xi_{sg} + \varepsilon_{i}  \\
    &=1_T \alpha_i  + L_T \beta_{1g} + L_T\delta_{sg} + \varepsilon_{i}    , \quad \varepsilon_{i}\sim N(0,\Sigma_s)    
\end{align*}
where $\beta_{11,g}=0$ and $\delta_{s1,g}=0$ as before, 
but now with the additional restriction that $\delta_{st,g}=0$ for the pre-periods $t<s$. 

Under this restriction, there are $T-s+1$ effective parameters in $\delta_{sg}$. 
Hence, define the $(T-s+1)$-dimensional vector $\gamma_{sg}=(\delta_{sg,s},\delta_{sg,s+1},\ldots,\delta_{sg,T})'$. 
It is related to the original vector via $\delta_{sg}=R_s\gamma_{sg}$, 
where  $R_s=[e_s,\ldots,e_T]$ is the $T\times (T-s+1)$ selection matrix. 
We still have $\beta_{1g}=R_0 \eta_{1g}$ as in the baseline case.

We can then write the restricted model for the treated cohorts as  
\[
    y_i=1_T \alpha_i  +L_T R_0\eta_{1g} + L_T R_s \gamma_{sg} + \varepsilon_{sg}    , \quad \varepsilon_{sg}\sim N(0,\Sigma_s), \quad s\in\mathcal{S},
\]
Hence the conditional likelihood is
\begin{equation}
    f(y\vert\theta,\alpha) = 
    \prod_{g=1}^G 
    \prod_{i\in N_{1g} } 
    N_T\left( y_i \vert 1_T \alpha_i  + L_T R_0\eta_{1g} , \Sigma_1\right)
    \cdot 
    \prod_{s\in\mathcal{S}}
    \prod_{g=1}^G
    \prod_{i\in N_{sg} }
    N_T\left( y_i \vert 1_T \alpha_i  + L_T R_0\eta_{1g} + L_T R_s \gamma_{sg}, \Sigma_s \right). \label{eq:like}    
\end{equation}
Integrating out $\alpha$, 
\begin{equation}
    f(y\vert\theta) = 
    \prod_{g=1}^G 
    \prod_{i\in N_{1g} } 
    N_T\left( y_i \vert 1_T w_i'\phi_1  + L_T R_0\eta_{1g} , \Lambda_1\right)
    \cdot 
    \prod_{s\in\mathcal{S}}
    \prod_{g=1}^G
    \prod_{i\in N_{sg} }
    N_T\left( y_i \vert 1_T w_i'\phi_s + L_T R_0\eta_{1g} + L_T R_s \gamma_{sg}, \Lambda_s \right),\label{eq:like_int}    
\end{equation}
where for $s\in \{1\}\cup \mathcal{S}$, 
\[
    \Lambda_s = \Sigma_s  + D_s 1_T 1_T'.
\]
The  priors are given as
\begin{align*}
    \eta_{1g}&\overset{\text{ind}}{\sim} N_{T-1}(\mu_{\eta_{1g}},V_{\eta_{1g}}) , \hspace{2cm} g\in\mathcal{G} \\
    \gamma_{sg}&\overset{\text{ind}}{\sim} N_{T-s+1}(\mu_{\gamma_{sg}},V_{\gamma_{sg}}), \hspace{1.6cm} g\in\mathcal{G}, s\in \mathcal{S} \\
    \sigma^2_{st}&\overset{\text{ind}}{\sim} \text{InvGam}(a_{st}/2, b_{st}/2) , \hspace{1.1cm} t=1\ldots,T, s\in \mathcal{S}\\
    \phi_s&\overset{\text{ind}}{\sim} N_K(\mu_{\phi_s},V_{\phi_s}) , \hspace{2.5cm} s\in \mathcal{S}\\
    D_s&\overset{\text{ind}}{\sim} \text{InvGam}(a_{D_s}/2,b_{D_s}/2) \hspace{1cm} s\in \mathcal{S}. 
\end{align*}
\subsection{Sampling under pre-PT}
The sampling of $\eta_{1g}, \sigma^2_{st}, \alpha_i, \phi_s,$ and $D_s$ stay unchanged once sampling $\gamma_{sg}$  and replacing $\delta_{sg}$ by $R_s\gamma_{sg}$ in all expressions. 
For $i\in N_{sg}$, let  
\[
\tilde{e}_i(\beta_1,\delta(\gamma),\phi) = y_i  -1_T w_i'\phi_s - L_T \beta_{1g} -1(s\ne 1) L_T \delta_{sg}
\]
where $\delta_{sg} = R_s \gamma_{sg}$ 
and $\tilde{e}_{it}(\beta_1,\delta,\phi) $ be its $t$th element. 

\paragraph{Update $\gamma_{sg}$} 
\[
\gamma_{sg} \vert \bullet \sim N_{T-s+1}(m_{sg},B_{sg}^{-1}),
\]
where 
\begin{align*}
    B_{sg}&=V_{\gamma_{sg}}^{-1} +  n_{sg} R_s'L_T' \Lambda_s^{-1} L_T R_s\\
    m_{sg}&=B_{sg}^{-1}\left(V_{\gamma_{sg}}^{-1} \mu_{\gamma_{sg}} +  \sum_{i\in N_{sg}} R_s'L_T'\Lambda_s^{-1} \tilde{e}_i(\beta_1,\delta=0,\phi)\right)
\end{align*}

\section{Theory}\label{sec:theory}
In this section, we derive the frequentist properties of the proposed Bayesian approach. In particular, we establish a Bernstein--von Mises type theorem for the ATTs. This indicates that the Bayesian credible sets of the ATTs, which can be obtained easily from the draws of the simple MCMC algorithm, have frequentist interpretations. 

For  notational simplicity, we establish theoretical results under the original parametrization. All the results apply to the reparametrized model. 
Let $\beta$ and $\phi$ be the vectors with elements 
$\{\beta_{sg}\}_{s \in \{1\} \cup \mathcal{S}, g\in \mathcal{G}}$ 
and 
$\{\phi_s\}_{s \in \{1\} \cup \mathcal{S}}$, respectively. 
We impose the normalization that we discussed, namely $\beta_{1s,g}=0$.
Let      
$\sigma^2$ and $D$  
be the vectors with elements 
$\{ (\sigma^2_{s1},\ldots,\sigma^2_{sT}) \}_{s \in \{1\} \cup \mathcal{S}}$ 
and 
$\{ D_{s} \}_{s \in \{1\} \cup \mathcal{S}}$, respectively. 
Recall that $\Sigma_s = \text{diag}((\sigma^2_{s1},\ldots,\sigma^2_{sT}))$. 
For this section, we define the vector that collects all the parameters as   
$\theta=( \beta, \phi,  \sigma^2, D) \in \theta$. Let $\theta^*$ be the true value of $\theta$. 

We consider the asymptotic framework with a fixed $T$ and an increasing $n$. Conditional on covariates $w_i \in \mathbb{R}^{K}$ (and hence stratum assignment $G_i=g\in\mathcal{G}$) and treatment assignment $S_i=s\in \{1\} \cup \mathcal{S}$,  a sequence of outcomes $y_i=(Y_{i1},\ldots,Y_{iT})'\in \mathbb{R}^T$ is generated from the model $p_{\theta^*}$ defined as: 
\[
p_{\theta}(y_i \vert w_i, S_i=s, G_i=g)=N(y_i \vert  X_i  \psi_{sg}, \Lambda_s),
\]
where $  X_i=(1_T w_i', L_T)$, $\psi_{sg}=(\phi_s',\beta_{sg}')'$, and $\Lambda_s = \Sigma_s + D_s 1_T 1_T'$. The data contains outcomes $y_i$, covariates $w_i$, and treatment assignment $S_i$: $\text{Data}^n=\{\text{Data}_i=(y_i, w_i, S_i): i=1,\ldots,n\}$. The covariates $w_i$ are iid and generated  from $q^*$. The $G$ partitions of the compact covariate space $\mathcal{G}$ ensures $r^*_g=\Pr(G_i=g)>0$ for all $g\in \mathcal{G}$. 

The treatment assignments are iid and generated from $h^*=\{h^*_s: s\in \{1\} \cup \mathcal{S} \}$ with $h^*_s=\Pr(S_i=s)>0$ for  $\forall s$. The positivity of $h^*_s$ is ensured by the definition of $\mathcal{S}$ and the assumption that the never-treated units exist. We do not model $q^*$ and $h^*$.
The joint probability measure implied by $p_{\theta^*}$, $q^*$, and $h^*$ is denoted by $F_0$. We first establish  Bernstein--von Mises theorem for the model parameters $\theta$. 

\begin{theorem}
    Suppose 
    (1) $\theta = \Theta_\beta \times \Theta_\phi \times \Theta_\sigma \times \Theta_D$, where 
    $\Theta_\beta \subset \mathbb{R}^{TG(1+\vert \mathcal{S}\vert)}$ and 
    $\Theta_\phi \subset \mathbb{R}^{K(1+\vert \mathcal{S}\vert)}$ are compact,  
    $\Theta_\sigma = [\ell_\sigma,u_\sigma]^{T(1+\vert \mathcal{S}\vert)}$, and 
    $\Theta_D = [\ell_D,u_D]^{1+\vert \mathcal{S}\vert}$, 
    for some $0<\ell_\sigma,\ell_D$ and $u_\sigma, u_D <\infty$, 
    (2) $\mathcal{W}\subset \mathbb{R}^K$ is compact, 
    $\int \Vert X'X \Vert q^*(w) dw < \infty$ and $\mathbb{E}(X' \Lambda^{-1}_s X)$ is positive definite for all $\Sigma_s,D_s$ for all $s\in \{1\} \cup \mathcal{S}$, and 
    (3) $\theta^* \in \text{int}(\Theta)$. 
    Suppose that the prior is absolutely continuous in a neighborhood of $\theta^*$ with a continuous positive density at $\theta^*$. Let $\hat{\theta}_n$ be the maximum likelihood estimator of $\theta$ and $I_{\theta^*}$ be the Fisher information matrix at $\theta^*$. Then, as $n \to \infty$,
    \[
    d_{TV}\bigg( \pi\left( \sqrt{n}(\theta - \theta^*) \vert \text{Data}^n \right), 
    N\left(\Delta_{n,\theta^*}, I^{-1}_{\theta^*} \right)\bigg) \to 0
    \]
    in $F^\infty_0$-probability, where $\Delta_{n,\theta^*}=\frac{1}{\sqrt{n}}\sum_{i=1}^n  I^{-1}_{\theta^*} \ell_{\theta^*(\text{Data}_i)}$ with $\ell_{\theta}$ being the score function. 
\end{theorem}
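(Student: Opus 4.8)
The plan is to verify the three hypotheses of the classical Bernstein--von Mises theorem for locally asymptotically normal (LAN) regular parametric families (in the style of Le Cam and van der Vaart, 1998, Ch.~10): (i) differentiability in quadratic mean (DQM) at $\bm\theta^*$, which yields the LAN expansion with centering $\bm\Delta_{n,\theta^*}$; (ii) existence of uniformly consistent tests separating $\bm\theta^*$ from the complements of its neighborhoods, which controls posterior mass outside shrinking balls; and (iii) positivity and continuity of the prior density at $\bm\theta^*$, which is assumed outright. Since the observations $\bm D_i=(\bm y_i,\bm w_i,S_i)$ are i.i.d., I would work with the per-observation joint density $p_\theta(\bm y_i\mid\bm w_i,S_i)\,g^*(\bm w_i)\,h^*_{S_i}$; because $g^*$ and $h^*$ do not depend on $\bm\theta$, they cancel in every log-likelihood ratio and enter only through design-averaging in the information. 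A useful structural simplification is that, in the original parametrization, no component of $\bm\theta$ is shared across sequences, so the likelihood factorizes into $|\{1\}\cup\mathcal{S}|$ independent Gaussian regressions reweighted by the assignment probabilities, and the analysis may be carried out block-by-block in $s$.

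First I would establish LAN. For fixed $\bm w_i$ and $S_i=s$, the family $N(\bm X_i\bm\phi_s,\bm\Lambda_s)$ is infinitely smooth in $\bm\theta$ on $\mathrm{int}(\bm\Theta)$: the mean is linear in $(\bm\gamma_s,\bm\beta_s)$, and $\bm\Lambda_s=\bm\Sigma_s+D_s\bm 1\bm1'$ is a smooth, invertible function of $(\bm\sigma_s^2,D_s)$, with invertibility guaranteed on $\bm\Theta_\sigma\times\bm\Theta_D$ by $\bm\Sigma_s\succeq\ell_\sigma\bm I$. The score $\bm\ell_\theta$ has the familiar Gaussian form, linear in $\bm\Lambda_s^{-1}(\bm y_i-\bm X_i\bm\phi_s)$ for the mean block and quadratic in the residual for the covariance block, and all its components are square-integrable under $F_0$ by the moment bound $\int\Vert\bm X'\bm X\Vert\,g^*(\bm w)\,d\bm w<\infty$ in condition~(2) together with the Gaussian tails. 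Smoothness plus square-integrability of the score upgrades to DQM of the marginal model, giving the LAN expansion with $\bm\Delta_{n,\theta^*}=n^{-1/2}\sum_i \bm I_{\theta^*}^{-1}\bm\ell_{\theta^*}(\bm D_i)$.

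Next I would confirm that $\bm I_{\theta^*}$ is nonsingular, since it is the precision of the limiting Gaussian. The information is block-diagonal across $s$ (by the factorization above) and, within each $s$, block-diagonal between mean and variance parameters (the standard Gaussian vanishing of mean--covariance cross-information). The mean block is $\sum_s h^*_s\,\E[\bm X'\bm\Lambda_s^{-1}\bm X]$, positive definite by condition~(2) and $h^*_s>0$. For the variance block I would argue that $(\bm\sigma_s^2,D_s)\mapsto\bm\Lambda_s$ is an immersion: $D_s$ is recovered from any off-diagonal entry of $\bm\Lambda_s$ and then $\sigma_{st}^2$ from the $t$th diagonal entry, so the parametrization Jacobian has full column rank and the covariance-score information is positive definite. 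The same injectivity, together with mean identifiability, shows that distinct $\bm\theta$ induce distinct $F_0$. On the compact set $\bm\Theta=\bm\Theta_\beta\times\bm\Theta_\gamma\times\bm\Theta_\sigma\times\bm\Theta_D$ the Kullback--Leibler divergence $K(\bm\theta^*,\bm\theta)$ is continuous and, by identifiability, strictly positive off $\bm\theta^*$, hence bounded below on $\{\Vert\bm\theta-\bm\theta^*\Vert\ge\epsilon\}$; this produces the exponentially consistent tests (equivalently, consistency of $\hat{\bm\theta}_n$) required by the master theorem.

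Combining LAN, the tests, and the prior positivity assumption delivers the stated total-variation convergence to $N(\bm\Delta_{n,\theta^*},\bm I_{\theta^*}^{-1})$; the reparametrized version follows immediately because $\bm\delta_s=\bm\beta_s-\bm\beta_1$ is a fixed linear bijection of $\bm\theta$, under which both the Gaussian limit and total-variation distance are preserved. The step I expect to be the main obstacle is not any single estimate but the careful treatment of the \emph{random, unmodeled design}: the model specifies only the conditional law of $\bm y_i$, so I must show that pointwise (conditional) smoothness upgrades to quadratic-mean differentiability of the \emph{marginal} family and that the design-averaged information $\sum_s h^*_s\,\E_{g^*}[\,\cdot\,]$ is both finite and nonsingular. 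Conditions~(1) and~(2) are precisely what tame the otherwise unbounded covariate contributions and make this upgrade go through, and verifying that they suffice is where the bulk of the technical care is required.
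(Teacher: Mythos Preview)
Your proposal is correct and follows essentially the same route as the paper: both verify the hypotheses of the classical Bernstein--von Mises theorem in van der Vaart (Chapter~10) by establishing identifiability, a continuity/compactness argument to produce uniformly consistent tests, nonsingularity of $\bm I_{\theta^*}$, and then invoking the prior assumption. The paper packages the test construction into two lemmas (identifiability of the Gaussian regressions and total-variation continuity of $\theta\mapsto p_\theta$ via Pinsker and explicit KL bounds), whereas you argue directly through KL being continuous and bounded away from zero on the complement of neighborhoods; you are also more explicit than the paper about DQM/LAN and the block-diagonal structure of the information matrix, but these are differences of emphasis rather than of strategy.
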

\begin{proof}
    Lemmas 1–2 verify the identifiability and continuity assumptions required for the existence of uniformly consistent tests; see Theorem 6.6 of van der Vaart (2000). Since the model is finite dimensional and differentiable in quadratic mean with nonsingular Fisher information, Theorem 10.1 of van der Vaart (2000) implies the Bernstein–von Mises result (\citealp{van2000asymptotic}, ch.10), together with the prior positivity condition. 
\end{proof}
Now we study the asymptotic behavior of the posterior distribution of the ATTs. Let $P$ denote the prior for $\theta$ and $\theta_n^P$ denote the random variable with the law equal to the posterior distribution of $\theta$ given a sample $\text{Data}^n$ of size $n$. By Theorem 1, it converges in distribution to $N(\theta^* + \frac{1}{\sqrt{n}}\Delta_{n,\theta^*}, (n I_{\theta^*})^{-1} )$. Let $\tau_{ATT}$ be the vector with elements $\{ \tau_{ATT}^{(sg)} \}_{s\in \mathcal{S},\ g\in\mathcal{G}}$, where $ \tau_{ATT}^{(sg)} = L^{post} (\beta_{sg} - \beta_{1g})$. Denote the mapping $\theta \mapsto \tau_{ATT}$ by $f$; i.e. $f(\theta) = \tau_{ATT}$. Define the random variable $\tau_{ATT,n}^P=f(\theta_n^P)$ and, similarly, $\hat{\tau}_{ATT, n} = f(\hat{\theta}_n)$, where $\hat{\theta}_n$ is a consistent estimator for $\theta^*$. Note that the plug-in estimator is consistent, i.e, $\hat{\tau}_{ATT, n} \overset{p}{\to} \tau_{ATT}^*$, the true value of $\tau_{ATT}$.

By applying the Bayesian delta-method 
we obtain the following as a consequence of Theorem 1. 
\begin{cor}
    Suppose the conditions of Theorem 1. Then, as $n \to \infty$,  
    $\tau_{ATT,n}^P$ converges in distribution to 
    \[
    N\left( f( \tilde{\theta}_n ), 
    \nabla f(\tilde{\theta}_n)' (n I_{\theta^*})^{-1} \nabla f(\tilde{\theta}_n) \right)    
    \]
    where $\tilde{\theta}_n = \theta^* + \frac{1}{\sqrt{n}}\Delta_{n,\theta^*} \overset{p}{\to} \theta^* $ and $\nabla$ denotes the gradient operator. 
\end{cor}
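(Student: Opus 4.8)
The plan is to obtain the result as a direct consequence of Theorem~1 via the Bayesian delta method, exploiting the fact that the map $f$ is \emph{linear}. Indeed, since $\bm\tau_{ATT}^{(s)}=\bm L^{post}(\bm\beta_s-\bm\beta_1)$, stacking over $s\in\mathcal{S}$ shows that $f(\bm\theta)=\bm A\bm\theta$ for a fixed matrix $\bm A$ whose rows encode the $\bm L^{post}$ contrasts of $\bm\beta$ and whose action on the remaining nuisance coordinates $\bm\gamma,\bm\sigma^2$, and $\bm D$ is zero. Consequently the gradient $\nabla f(\bm\theta)=\bm A'$ is constant in $\bm\theta$, the first-order Taylor expansion of $f$ is \emph{exact}, and the delta-method approximation carries no remainder term.

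First I would restate Theorem~1 in its rescaled form: the posterior law of $\bm\theta_n^P$ satisfies
\[
d_{TV}\!\Big(\mathcal{L}\big(\sqrt{n}(\bm\theta_n^P-\bm\theta^*)\big),\, N(\bm\Delta_{n,\theta^*},\bm I_{\theta^*}^{-1})\Big)\to 0
\]
in $F_0^\infty$-probability, equivalently that the law of $\bm\theta_n^P$ is asymptotically $N(\tilde{\bm\theta}_n,(n\bm I_{\theta^*})^{-1})$ with $\tilde{\bm\theta}_n=\bm\theta^*+\tfrac{1}{\sqrt{n}}\bm\Delta_{n,\theta^*}$. Because total-variation distance is non-increasing under any measurable map, I would then push both arguments forward through $f$, obtaining that the law of $\bm\tau_{ATT,n}^P=f(\bm\theta_n^P)$ is close in total variation to the image measure $N(\tilde{\bm\theta}_n,(n\bm I_{\theta^*})^{-1})\circ f^{-1}$.

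The final step is to identify this image measure. Since $f$ is linear, the pushforward of a Gaussian is again Gaussian, with mean $f(\tilde{\bm\theta}_n)$ and covariance $\bm A(n\bm I_{\theta^*})^{-1}\bm A'=\nabla f(\tilde{\bm\theta}_n)'(n\bm I_{\theta^*})^{-1}\nabla f(\tilde{\bm\theta}_n)$; linearity also gives the exact identity $f(\tilde{\bm\theta}_n)=f(\bm\theta^*)+\tfrac{1}{\sqrt{n}}\bm A\bm\Delta_{n,\theta^*}$. This is precisely the stated limit $N\big(f(\tilde{\bm\theta}_n),\,\nabla f(\tilde{\bm\theta}_n)'(n\bm I_{\theta^*})^{-1}\nabla f(\tilde{\bm\theta}_n)\big)$, and convergence in distribution follows from convergence in total variation.

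The main point requiring care is the mode of convergence: the posterior is a data-dependent random measure, and both the centering $\tilde{\bm\theta}_n$ and the scale $(n\bm I_{\theta^*})^{-1}$ drift with $n$. I would handle this exactly as in the Bayesian delta-method references (\citealp{BernardoSmith1994bayesian}, Section~5.3): the total-variation statement of Theorem~1 holds in $F_0^\infty$-probability, and invariance of total-variation distance under the measurable map $f$ transfers the conclusion to $f(\bm\theta_n^P)$ without any further stochastic-equicontinuity argument. Were $f$ genuinely nonlinear, the hard part would instead be a uniform bound on the Taylor remainder over shrinking posterior neighbourhoods of $\bm\theta^*$; the linearity of the ATT map removes that difficulty entirely, which is why the result is an immediate corollary.
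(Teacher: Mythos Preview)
Your proposal is correct and follows the same route as the paper, which simply invokes the Bayesian delta method (\citealp{BernardoSmith1994bayesian}, Section~5.3) as a one-line justification. Your additional observation that $f$ is linear in $\bm\theta$, so that the Taylor expansion is exact and the pushforward of the limiting Gaussian can be computed directly via the total-variation invariance, is a clean sharpening of the argument that the paper leaves implicit.
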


In other words, the posterior of ATTs is asymptotically normal. Therefore, Bayesian credible sets have asymptotically correct nominal coverage and are valid confidence sets. 
The priors introduced in Section 
satisfy the assumptions underlying the theoretical results.

\subsection{Proofs of the intermediate results} \label{sec:additional_theory}
We use these results in the proof of Theorem 1. 
\begin{lemma}[Identifiability]
    Suppose that $\mathbb{E}(X' \Lambda^{-1}_s X)$ is positive definite for all $\Sigma_s,D_s$ for all $s\in \{1\} \cup \mathcal{S}$. Then, $p_{\theta_1}(y \vert w) \ne p_{\theta_2}(y \vert w)$  with positive probability for all $\theta_1 \ne \theta_2$.
\end{lemma}
Let $\Vert \cdot \Vert_1$ and $\Vert \cdot \Vert_2$ be the $L_1$ and $L_2$ norms, respectively. 
Define the distance between parameters by 
    \[
    d(\theta_1, \theta_2)=
    \max 
    \left\{
    \Vert \beta_1 - \beta_2 \Vert_2^2 ,   
    \Vert \phi_1 - \phi_2 \Vert_2^2,    
    \Vert \sigma^2_1 - \sigma^2_2 \Vert_1,
    \Vert D_1 - D_2 \Vert_1 
    \right\}.
    \]
\begin{lemma}[Continuity]
    Suppose $\Theta$ is compact and $\int \Vert X'X \Vert q^*(w) dw < \infty$, where $\Vert \cdot \Vert$ denotes the largest absolute value of the eigenvalue. Then the map $\theta \mapsto p_{\theta}$ is continuous with respect to the total variation distance if we define the distance between parameters by $d(\theta_1, \theta_2)$.
\end{lemma}

\begin{proof}[Proof of Lemma 1]
    Let $\theta_1 \ne \theta_2$. There is  $(s,g)\in \left( \{1\} \cup \mathcal{S}\right)\times \mathcal{G}$ with 
    $\psi_{1sg}\ne \psi_{2sg}$ or $\Lambda_{s,1}\ne \Lambda_{s,2}$. Note that 
    \[
    \log p_\theta(y \vert w,S=s,G=g) 
    = 
    -\frac{T}{2}\log 2\pi 
    -\frac{1}{2}\log \det(\Lambda_{s})
    -\frac{1}{2}(y - X \psi_{sg})' \Lambda_{s}^{-1} (y - X \psi_{sg}).
    \]
    If $\Lambda_{s,1}\ne \Lambda_{s,2}$,  clearly $p_{\theta_1}(y \vert w,s,g) \ne p_{\theta_2}(y \vert w,s,g)$ with positive probability. Hence suppose that 
     $\psi_{sg,1}\ne \psi_{sg,2}$ and $\Lambda_{s,1} = \Lambda_{s,2} = \Lambda_{s}$. 
     Note that for $\varepsilon_s \sim N_T(0, \Lambda_{s})$,
     \begin{align*}
        &(y - X \psi_{sg})' \Lambda_{s}^{-1} (y - X \psi_{sg})
        =
        \left\{ X(\psi_{sg} - \psi_{sg}^* ) + \varepsilon_s \right\}'
        \Lambda_{s}^{-1} 
        \left\{ X(\psi_{sg} - \psi_{sg}^* ) + \varepsilon_s \right\}  \\
        &=
        (\psi_{sg} - \psi_{sg}^* )' X'  \Lambda_{s}^{-1} X  (\psi_{sg} - \psi_{sg}^* )
        +
         (\psi_{sg} - \psi_{sg}^* )' X'  \Lambda_{s}^{-1} \varepsilon_s 
        +  
        \varepsilon_s' \Lambda_{s}^{-1} X  (\psi_{sg} - \psi_{sg}^* )
        +
        \varepsilon_s' \Lambda_{s}^{-1} \varepsilon_s 
     \end{align*}
     Hence 
     \[
        \mathbb{E}(y - X \psi_{sg})' \Lambda_{s}^{-1} (y - X \psi_{sg})
         =
         (\psi_{sg} - \psi_{sg}^* )' \mathbb{E}\left( X'  \Lambda_{s}^{-1} X \right) (\psi_{sg} - \psi_{sg}^* )
         +
         \mathbb{E}\left(  \varepsilon_s' \Lambda_{s}^{-1} \varepsilon_s  \right)
     \]    
     Then 
     \begin{align*}
     &\mathbb{E}(y - X \psi_{sg,1})' \Lambda_{s}^{-1} (y - X \psi_{sg,1})
     -
     \mathbb{E}(y - X \psi_{sg,2})' \Lambda_{s}^{-1} (y - X \psi_{sg,2})  \\
     &=
    (\psi_{sg,1} - \psi_{sg,2} )' 
    \mathbb{E}\left( X'  \Lambda_{s}^{-1} X \right) 
    (\psi_{sg,1} - \psi_{sg,2} )   ,           
     \end{align*}     
     which is positive if $\mathbb{E}(X' \Lambda^{-1}_s X)$ is positive definite.
\end{proof}

\begin{proof}[Proof of Lemma 2]
    The total variation distance between $p_{\theta_1}$ and $p_{\theta_2}$ is 
    \[
    d_{TV}(p_{\theta_1},p_{\theta_2})= \int \sum_{g\in\mathcal{G}}\sum_{s\in \{1\} \cup \mathcal{S}}\vert p_{\theta_1}(y \vert w, s,g) - p_{\theta_2}(y \vert w, s,g) \vert h_s^* r_g^* q^*(w) dw dy.
    \]
    By Pinsker’s inequality, the total variation distance is bounded by 2 times the square root of the KL distance. For each cohort $s$ and stratum $g$, the KL distance between two normal distributions 
    $N(y_i \vert X_i \psi_{sg,1}, \Lambda_{s,1})$
    and 
    $N(y_i \vert X_i \psi_{sg,2}, \Lambda_{s,2})$ is, 
    \begin{equation}
    \frac{1}{2}
    \left( 
    \text{tr}\left(\Lambda_{s,2}^{-1}\Lambda_{s,1} - I \right) 
    +
    (\mu_{sg,2} - \mu_{sg,1})' \Lambda_{s,2}^{-1} (\mu_{sg,2} - \mu_{sg,1})
    +
    \log \frac{\det (\Lambda_{s,2})}{\det (\Lambda_{s,1})}
    \right),    \nonumber  
    \end{equation}    
    where $\mu_{sg,k} = X  \psi_{sg,k}$, for $k=1,2$. 
    Denote by $\sigma^2_{s,k} =(\sigma^2_{s1,k},\ldots,\sigma^2_{sT,k})$ , for $k=1,2$. 
    By Lemmas 3-5, the KL distance is bounded by, for some constants $d_1,d_2,d_3>0$,   
    \[
        d_1 \Vert \sigma^2_{s,1} - \sigma^2_{s,2} \Vert_1
        + 
        d_2 \vert D_{s,1} - D_{s,2} \vert   
        +
        d_3 \Vert X' X \Vert \cdot \Vert \psi_{sg,1} - \psi_{sg,2} \Vert_2^2.
    \]
    Therefore, $d_{TV}(p_{\theta_1},p_{\theta_2})$ is bounded by 
    \begin{align*}
    &
    \tilde{d}_1 
    \sum_{sg} 
    h^*_s
    r^*_g
    \left( 
    \Vert \sigma^2_{s,1} - \sigma^2_{s,2} \Vert_1 
    + 
    \vert D_{s,1} - D_{s,2} \vert
    \right) \\
    &+
    \tilde{d}_2 \int \Vert X' X \Vert q^*(w) d w 
    \cdot 
    \sum_{sg} 
    h^*_s
    r^*_g
    \left(
    \Vert \phi_{s,1} - \phi_{s,2} \Vert_2^2 
    +
    \Vert \beta_{sg,1} - \beta_{sg,2} \Vert_2^2 
    \right)
    \\
    &\leq 
    \tilde{d}_3 \Vert \sigma^2_1 - \sigma^2_2 \Vert_1 
    +
    \tilde{d}_4 \Vert D_1 - D_2 \Vert_1
    +
    \tilde{d}_5 \int \Vert X' X \Vert q^*(w) d w 
    \cdot   
    \left(
    \Vert \phi_1 - \phi_2 \Vert_2^2
    +
    \Vert \beta_1 - \beta_2 \Vert_2^2    
    \right)
    \\
    &\leq  
    \tilde{d}_6
    \max 
    \left\{
    \Vert \beta_1 - \beta_2 \Vert_2^2 ,   
    \Vert \phi_1 - \phi_2 \Vert_2^2,    
    \Vert \sigma^2_1 - \sigma^2_2 \Vert_1,
    \Vert D_1 - D_2 \Vert_1    
    \right\}
    \end{align*}    
    for some constants $\tilde{d}_1,\tilde{d}_2,\tilde{d}_3,\tilde{d}_4,\tilde{d}_5,\tilde{d}_6>0$. 
\end{proof}

In Lemmas 3-5 below, we derive  an upper bound for each of the following terms (ignoring the $(s,g)$ subscripts for the ease of notation):
    \begin{equation}
    \frac{1}{2}
    \left( 
    \text{tr}\left(\Lambda_{2}^{-1}\Lambda_{1} - I \right) 
    +
    (\mu_2 - \mu_1)' \Lambda_{2}^{-1} (\mu_2 - \mu_1)
    +
    \log \frac{\det (\Lambda_{2})}{\det (\Lambda_{1})}
    \right),    \label{eq:KL_normal}    
    \end{equation} 
where $\mu_{k} = X \psi_{k}$, for $k=1,2$.    
Recall that $\Lambda = \Sigma + D \cdot 1_T 1_T'$, 
where $\Sigma=\text{diag}(\sigma^2_1,\ldots,\sigma^2_T)$. 
By the Sherman-Morrison formula, we can write
    \begin{equation} 
    \Lambda^{-1} 
    = \Sigma^{-1} - D \frac{\Sigma^{-1} 1_T1_T' \Sigma^{-1}}{1+D \sum_{t=1}^T \frac{1}{\sigma^2_t}}
    \equiv \Sigma^{-1} - M, \label{eq:sherman_morrison}
    \end{equation}
    where $M$ is positive semi-definite. 
Lemma 3 derives an bound for the second term in \eqref{eq:KL_normal}.    
\begin{lemma} 
Suppose $0< \ell_\sigma < \sigma^2_t, \forall t=1,\ldots,T$,  
and 
$\int \Vert X'X \Vert q^*(w) dw < \infty$, where $\Vert \cdot \Vert$ denotes the largest absolute value of the eigenvalue. 
Then, in equation \eqref{eq:KL_normal},
    \begin{align*}
        (\mu_2 - \mu_1)' \Lambda_{2}^{-1} (\mu_2 - \mu_1)
        \leq 
        b\Vert X' X \Vert \cdot \Vert \psi_1 - \psi_2 \Vert_2^2.
    \end{align*}

\end{lemma}
\begin{proof}
    By \eqref{eq:sherman_morrison}, we write $\Lambda^{-1}_2 = \Sigma_2^{-1} - M_2$. Since $M_2$ is positive semi-definite,  $(\mu_2 - \mu_1)' M_2 (\mu_2 - \mu_1) \geq 0$,
    and, for some constant $b>0$, we have 
    \[
    (\mu_2 - \mu_1)' \Lambda_2^{-1} (\mu_2 - \mu_1)
     \leq
     (\mu_2 - \mu_1)' \Sigma_2^{-1} (\mu_2 - \mu_1)
     \leq 
     \frac{\Vert \mu_1 - \mu_2 \Vert_2^2}{\min_t \sigma^2_{2t}}
     \leq 
     b\Vert X' X \Vert \cdot \Vert \psi_1 - \psi_2 \Vert_2^2.
    \]
\end{proof}    
To bound the last term in \eqref{eq:KL_normal}, note that 
by the Matrix Determinant lemma, we have
\begin{equation}
\det (\Lambda) 
= \det(\Sigma) \cdot (1 + D 1_T' \Sigma^{-1} 1_T)
= (\prod_t \sigma^2_{t} ) \cdot (1+D \sum_{t} 1/\sigma^2_t). \label{eq:matrix_determinant_lemma}
\end{equation}
\begin{lemma} 
Suppose 
$0< \ell_\sigma < \sigma^2_t, \forall t=1,\ldots,T$, 
and 
$0 < \ell_D < D$. 
In equation \eqref{eq:KL_normal},
    \begin{align*}
        \log \frac{\det (\Lambda_2) }{\det (\Lambda_1) } 
        \leq  
        c_1 \Vert \sigma^2_1 - \sigma^2_2 \Vert_1 + c_2 \vert D_1 - D_2 \vert,        
    \end{align*}
    for some constants $c_1,c_2>0$.
\end{lemma}
\begin{proof}
    By \eqref{eq:matrix_determinant_lemma}, 
    \begin{align*}
        \log \frac{\det (\Lambda_2) }{\det (\Lambda_1) }
        &=
        \log \left(
        \prod_t 
        \frac{\sigma^2_{2t}}{\sigma^2_{1t}}
        \right)
        \cdot 
        \left(
        \frac{1+D_2 \sum_{t} 1/\sigma^2_{2t}}{1+D_1 \sum_{t} 1/\sigma^2_{1t}}
        \right)   
        =
        \sum_t \log \left( \frac{\sigma^2_{2t}}{\sigma^2_{1t}} \right)
        +
        \log 
        \left(
        \frac{1+D_2 \sum_{t} 1/\sigma^2_{2t}}{1+D_1 \sum_{t} 1/\sigma^2_{1t}}
        \right)         
    \end{align*}
Since $\log x \leq x-1$, the first term is bounded by 
\[
\sum_t \bigg\vert \frac{\sigma^2_{2t}}{\sigma^2_{1t}} -1 \bigg\vert 
=
\sum_t \frac{\vert \sigma^2_{2t} - \sigma^2_{1t}\vert }{\sigma^2_{1t}}
\leq 
\frac{1}{\ell_\sigma} \sum_t \vert \sigma^2_{2t} - \sigma^2_{1t} \vert .
\]    
Since 
$\vert \log \frac{1+x}{1+y} \vert \leq \frac{\vert x-y\vert}{1+\min(x,y)}\leq \vert x -y\vert$ for $x,y\geq 0$, 
the second term above is bounded by 
\begin{align*}
    &\vert D_2 \sum_{t} 1/\sigma^2_{2t} - D_1 \sum_{t} 1/\sigma^2_{1t} \pm D_2 \sum_{t} 1/\sigma^2_{1t} \vert
    \leq 
    D_2 \sum_t \vert 1/\sigma^2_{2t}-1/\sigma^2_{1t}\vert + \vert D_2 - D_1 \vert \sum_t 1/\sigma^2_{1t}\\
    & \leq u_D \sum_t \bigg\vert \frac{\sigma^2_{1t}-\sigma^2_{2t}}{\sigma^2_{1t}\sigma^2_{2t}}\bigg\vert 
    +
    \vert D_2 - D_1 \vert T/ \ell_\sigma
    \leq 
    \frac{u_D}{\ell^2_\sigma} \sum_t \vert \sigma^2_{1t}-\sigma^2_{2t} \vert 
    +
    \vert D_2 - D_1 \vert T/ \ell_\sigma  .  
\end{align*}
Hence, 
\[
\log \frac{\det (\Lambda_2) }{\det (\Lambda_1) }
\leq 
c_1 \sum_t \vert \sigma^2_{1t} - \sigma^2_{2t} \vert 
+
c_2   \vert D_1 - D_2 \vert .
\]

\end{proof}
Lemma 5 derives a bound for the first term in \eqref{eq:KL_normal}.    
\begin{lemma}  
Suppose $0< \ell_\sigma < \sigma^2_t < u_\sigma < \infty, \forall t=1,\ldots,T$, 
and 
$0< \ell_D < D < u_D < \infty$. 
In equation \eqref{eq:KL_normal},
\begin{align*}
        \vert \text{tr}\left(\Lambda_{2}^{-1}\Lambda_{1} - I \right) \vert
        \leq 
        a_1 \Vert \sigma^2_1 - \sigma^2_2 \Vert_1 + a_2 \vert D_1 - D_2 \vert,
    \end{align*}
    for some constants $a_1,a_2>0$.
\end{lemma}
\begin{proof}
Note that 
\[
\Lambda_{2}^{-1}\Lambda_{1} - I 
=
\Lambda_{2}^{-1}(\Lambda_{1}-\Lambda_{2})
=
\Lambda_{2}^{-1}\left( (\Sigma_1 - \Sigma_2) + (D_1-D_2) 1_T 1_T'\right)
\]
Hence 
\[
\text{tr}\left(\Lambda_{2}^{-1}\Lambda_{1} - I \right) 
=
\text{tr}\left(\Lambda_{2}^{-1} (\Sigma_1 - \Sigma_2) \right) 
+
(D_1-D_2) \text{tr} (\Lambda_{2}^{-1}1_T 1_T' ).
\]
The first term by the Cauchy-Schwarz theorem is bounded by 
\[
\vert \vert \Lambda_{2}^{-1} \vert \vert_F 
\cdot 
\vert \vert \Sigma_1 - \Sigma_2 \vert \vert_F
\leq 
c \cdot \sqrt{\sum_t \vert \sigma^2_{1t}- \sigma^2_{2t} \vert^2}
=
c \cdot \Vert \sigma^2_1 - \sigma^2_2 \Vert_2
\leq 
c \cdot \Vert \sigma^2_1 - \sigma^2_2 \Vert_1,
\]
where $\vert \vert \cdot \vert \vert_F$ is the Frobenius norm.
By the property of trace, the second term is bounded by $\vert D_1-D_2 \vert$ times 
\begin{align*}
    \text{tr} (1_T' \Lambda_{2}^{-1}1_T ) 
   &=
    1_T' \Lambda_{2}^{-1}1_T  
    =
    1_T' (\Sigma_2^{-1} - M_2)1_T  
    \leq 
    1_T' \Sigma_2^{-1}1_T    
    = 
    \sum 1/\sigma^2_{2t}
    \leq
    T/\ell_\sigma
\end{align*}
where the positive semi-definite matrix $M_2$ is defined in \eqref{eq:sherman_morrison}.
\end{proof}

\clearpage
\section{Additional simulations}\label{sec:additional_simulation}
\subsection{More on simulation 1}
\noindent The table below shows simulation results under a large $n$.
\begin{table}[!htbp]
\centering
\caption{Simulation comparison: $G=3$, prePT=0, $n=250$, $R=300$}
\label{tab:sim_comparison_G3_prePT0_N250_R300}
\begin{threeparttable}
\resizebox{\textwidth}{!}{%
\begin{tabular}{lrrrrrrrrrr}
\hline\hline
ATT & True & BDID Bias & BDID RMSE & BDID Cov. & CS Split Bias & CS Split RMSE & CS Split Cov. & CS Pooled Bias & CS Pooled RMSE & CS Pooled Cov. \\
\hline
\multicolumn{11}{l}{\textit{Panel A: Individual ATT comparison}} \\
\hline
$\tau_{\mathrm{ATT}}(2,2; 1)$& -0.0209 & -0.0058 & 0.0778 & 0.967 & -0.0016 & 0.0580 & 0.930 & -0.0536 & 0.0651 & 0.713 \\
$\tau_{\mathrm{ATT}}(2,3; 1)$& -0.0810 & -0.0008 & 0.0667 & 0.970 & -0.0010 & 0.0528 & 0.940 & -0.2119 & 0.2197 & 0.047 \\
$\tau_{\mathrm{ATT}}(2,4; 1)$& -0.1434 & 0.0025 & 0.0757 & 0.973 & 0.0005 & 0.0596 & 0.960 & -0.3779 & 0.3885 & 0.013 \\
$\tau_{\mathrm{ATT}}(2,5; 1)$& -0.1074 & -0.0055 & 0.0812 & 0.960 & -0.0044 & 0.0652 & 0.940 & -0.2854 & 0.2940 & 0.027 \\
$\tau_{\mathrm{ATT}}(4,4; 1)$& 0.0035 & 0.0038 & 0.0630 & 0.983 & 0.0013 & 0.0491 & 0.943 & 0.0110 & 0.0317 & 0.917 \\
$\tau_{\mathrm{ATT}}(4,5; 1)$& -0.0350 & 0.0009 & 0.0623 & 0.983 & -0.0023 & 0.0504 & 0.960 & -0.0952 & 0.1013 & 0.220 \\
$\tau_{\mathrm{ATT}}(5,5; 1)$& -0.0271 & -0.0038 & 0.0926 & 0.973 & -0.0093 & 0.0768 & 0.923 & -0.0761 & 0.0878 & 0.637 \\
$\tau_{\mathrm{ATT}}(2,2; 2)$& -0.2091 & 0.0051 & 0.0766 & 0.973 & 0.0044 & 0.0623 & 0.917 & 0.1346 & 0.1395 & 0.060 \\
$\tau_{\mathrm{ATT}}(2,3; 2)$& -0.8104 & 0.0056 & 0.0748 & 0.953 & 0.0038 & 0.0571 & 0.953 & 0.5174 & 0.5206 & 0.000 \\
$\tau_{\mathrm{ATT}}(2,4; 2)$& -1.4337 & 0.0034 & 0.0784 & 0.970 & 0.0011 & 0.0636 & 0.930 & 0.9124 & 0.9168 & 0.000 \\
$\tau_{\mathrm{ATT}}(2,5; 2)$& -1.0738 & 0.0085 & 0.0830 & 0.957 & 0.0018 & 0.0643 & 0.943 & 0.6810 & 0.6847 & 0.000 \\
$\tau_{\mathrm{ATT}}(4,4; 2)$& 0.0349 & 0.0024 & 0.0615 & 0.987 & 0.0011 & 0.0518 & 0.920 & -0.0204 & 0.0360 & 0.880 \\
$\tau_{\mathrm{ATT}}(4,5; 2)$& -0.3503 & 0.0086 & 0.0649 & 0.977 & 0.0021 & 0.0495 & 0.957 & 0.2200 & 0.2227 & 0.000 \\
$\tau_{\mathrm{ATT}}(5,5; 2)$& -0.2713 & -0.0008 & 0.0915 & 0.960 & -0.0016 & 0.0727 & 0.940 & 0.1681 & 0.1737 & 0.043 \\
$\tau_{\mathrm{ATT}}(2,2; 3)$ & -0.0021 & 0.0044 & 0.0742 & 0.973 & 0.0018 & 0.0592 & 0.930 & -0.0724 & 0.0813 & 0.553 \\
$\tau_{\mathrm{ATT}}(2,3; 3)$ & -0.0081 & 0.0054 & 0.0658 & 0.970 & 0.0040 & 0.0523 & 0.927 & -0.2849 & 0.2907 & 0.000 \\
$\tau_{\mathrm{ATT}}(2,4; 3)$ & -0.0143 & 0.0031 & 0.0773 & 0.970 & 0.0009 & 0.0608 & 0.947 & -0.5070 & 0.5149 & 0.000 \\
$\tau_{\mathrm{ATT}}(2,5; 3)$ & -0.0107 & -0.0007 & 0.0841 & 0.937 & -0.0017 & 0.0652 & 0.937 & -0.3820 & 0.3885 & 0.000 \\
$\tau_{\mathrm{ATT}}(4,4; 3)$ & 0.0003 & -0.0036 & 0.0668 & 0.960 & -0.0003 & 0.0544 & 0.913 & 0.0142 & 0.0329 & 0.897 \\
$\tau_{\mathrm{ATT}}(4,5; 3)$ & -0.0035 & -0.0004 & 0.0658 & 0.963 & -0.0003 & 0.0525 & 0.933 & -0.1267 & 0.1313 & 0.050 \\
$\tau_{\mathrm{ATT}}(5,5; 3)$ & -0.0027 & 0.0045 & 0.0903 & 0.980 & 0.0030 & 0.0758 & 0.913 & -0.1005 & 0.1097 & 0.403 \\
\hline
\multicolumn{11}{l}{\textit{Panel B: Overall comparison}} \\
\hline
Method & -- & Mean Bias & Max $|$Bias$|$ & RMSE & SD & Coverage & -- & -- & -- & -- \\
BDID & -- & 0.0017 & 0.0086 & 0.0750 & 0.0836 & 0.969 & -- & -- & -- & -- \\
CS split & -- & 0.0002 & 0.0093 & 0.0597 & 0.0597 & 0.936 & -- & -- & -- & -- \\
CS pooled & -- & 0.0031 & 0.9124 & 0.2586 & 0.0520 & 0.260 & -- & -- & -- & -- \\
\hline\hline
\end{tabular}}
\begin{tablenotes}[flushleft]
\footnotesize
\item \textit{Notes:} BDID denotes the proposed estimator. CS Split and CS Pooled denote comparison estimators. 
\item Bias, RMSE, and coverage are computed over 300 simulation replications. 
\item Coverage refers to nominal 95\% credible/confidence interval coverage.
\end{tablenotes}
\end{threeparttable}
\end{table}
\FloatBarrier
\subsection{More on simulation 2}
\noindent The Table below shows simulation results when prePT = 1.
\FloatBarrier
\begin{table}[htbp]
\centering
\caption{ATT estimation results across sample sizes: $G=2$, prePT$_{DGP}=1$, prePT$_{EST}=1$}
\label{tab:ATT_combined_G2_prePT1}
\scriptsize
\begin{threeparttable}
\resizebox{\textwidth}{!}{%
\begin{tabular}{lrrrrrrrrrrrr}
\toprule
 & \multicolumn{3}{c}{Bias} & \multicolumn{3}{c}{RMSE} & \multicolumn{3}{c}{Post. SD} & \multicolumn{3}{c}{Coverage} \\
\cmidrule(lr){2-4} \cmidrule(lr){5-7} \cmidrule(lr){8-10} \cmidrule(lr){11-13}
ATT & $n=250$ & $n=500$ & $n=1000$ & $n=250$ & $n=500$ & $n=1000$ & $n=250$ & $n=500$ & $n=1000$ & $n=250$ & $n=500$ & $n=1000$ \\
\midrule
$\tau_{\mathrm{ATT}}(2,2; 1)$& -0.002 & 0.001 & -0.000 & 0.046 & 0.033 & 0.023 & 0.049 & 0.034 & 0.023 & 0.955 & 0.953 & 0.954 \\
$\tau_{\mathrm{ATT}}(2,3; 1)$& -0.001 & 0.001 & -0.001 & 0.042 & 0.030 & 0.021 & 0.046 & 0.031 & 0.022 & 0.967 & 0.957 & 0.958 \\
$\tau_{\mathrm{ATT}}(2,4; 1)$& -0.003 & 0.002 & -0.000 & 0.048 & 0.034 & 0.024 & 0.051 & 0.035 & 0.024 & 0.958 & 0.953 & 0.960 \\
$\tau_{\mathrm{ATT}}(2,5; 1)$& -0.002 & 0.000 & -0.000 & 0.055 & 0.038 & 0.026 & 0.055 & 0.038 & 0.027 & 0.941 & 0.948 & 0.953 \\
$\tau_{\mathrm{ATT}}(4,4; 1)$& 0.001 & 0.001 & 0.001 & 0.032 & 0.023 & 0.016 & 0.036 & 0.024 & 0.017 & 0.968 & 0.971 & 0.971 \\
$\tau_{\mathrm{ATT}}(4,5; 1)$& 0.001 & -0.001 & 0.000 & 0.039 & 0.028 & 0.020 & 0.042 & 0.029 & 0.020 & 0.972 & 0.957 & 0.949 \\
$\tau_{\mathrm{ATT}}(5,5; 1)$& 0.001 & -0.000 & 0.001 & 0.053 & 0.038 & 0.026 & 0.053 & 0.037 & 0.026 & 0.951 & 0.944 & 0.955 \\
$\tau_{\mathrm{ATT}}(2,2; 2)$& -0.001 & 0.001 & 0.000 & 0.046 & 0.033 & 0.024 & 0.049 & 0.034 & 0.023 & 0.959 & 0.953 & 0.948 \\
$\tau_{\mathrm{ATT}}(2,3; 2)$& 0.000 & 0.000 & 0.000 & 0.041 & 0.029 & 0.021 & 0.046 & 0.031 & 0.022 & 0.974 & 0.967 & 0.957 \\
$\tau_{\mathrm{ATT}}(2,4; 2)$& 0.000 & 0.001 & 0.000 & 0.047 & 0.033 & 0.024 & 0.051 & 0.035 & 0.024 & 0.967 & 0.963 & 0.956 \\
$\tau_{\mathrm{ATT}}(2,5; 2)$& 0.004 & 0.002 & 0.001 & 0.053 & 0.037 & 0.026 & 0.056 & 0.038 & 0.027 & 0.956 & 0.948 & 0.943 \\
$\tau_{\mathrm{ATT}}(4,4; 2)$& 0.001 & 0.000 & 0.001 & 0.033 & 0.023 & 0.016 & 0.037 & 0.024 & 0.017 & 0.968 & 0.960 & 0.960 \\
$\tau_{\mathrm{ATT}}(4,5; 2)$& 0.005 & 0.001 & 0.001 & 0.040 & 0.029 & 0.019 & 0.043 & 0.029 & 0.020 & 0.961 & 0.940 & 0.964 \\
$\tau_{\mathrm{ATT}}(5,5; 2)$& 0.004 & 0.003 & 0.001 & 0.052 & 0.037 & 0.025 & 0.053 & 0.037 & 0.026 & 0.953 & 0.942 & 0.959 \\
\midrule
OVERALL & 0.000 & 0.001 & 0.000 & 0.045 & 0.032 & 0.022 & 0.048 & 0.033 & 0.023 & 0.961 & 0.954 & 0.956 \\
\bottomrule
\end{tabular}
}
\begin{tablenotes}[flushleft]
\footnotesize
\item Notes: The table reports bias, RMSE, average posterior standard deviation, 
\item and empirical coverage of 95\% credible intervals across 1000 Monte Carlo repetitions. 
\end{tablenotes}
\end{threeparttable}
\end{table}
\FloatBarrier

\subsection{Simulation 3: marginal likelihood and specification selection}
We 
proposed marginal likelihood comparison as a probabilistic framework for evaluating alternative identifying assumptions and heterogeneity specifications. In this simulation study, we examine the finite-sample performance of this approach in two settings. Part A studies whether the marginal likelihood can correctly distinguish between models with and without pre-treatment parallel trends (\texttt{prePT}) restrictions when the degree of heterogeneity is fixed. This corresponds to settings in which the researcher already has a substantive stratification in mind. Part B then studies the more general problem of jointly selecting the number of strata $G$ and the pre-treatment specification. In particular, we examine whether the marginal likelihood appropriately penalizes unnecessarily rich heterogeneity structures while still detecting meaningful subgroup heterogeneity when supported by the data.
\paragraph{Part A: prePT selection conditional on correct $G$}
In this section, we examine the ability of the marginal likelihood to distinguish between whether parallel trends hold in the pre-treatment periods (\texttt{prePT}=1) or not (\texttt{prePT}=0). For each simulation design, we estimate the two competing models (\texttt{prePT}=0 vs 1) and compare them using  log marginal likelihood. The structure of stratification used in estimation is fixed at the truth in the data generating process.

Table \ref{tab:ML_selection_allG_N200} shows that the marginal likelihood consistently selects the correct pre-treatment specification across all values of $G$. When parallel trends do not hold in pre-periods in the true DGP, (\texttt{prePT}=0), the model imposing \texttt{prePT}=0 is selected in almost all Monte Carlo repetitions, with win shares between 0.93 and 1.00. Conversely, when parallel trends hold in the pre-treatment periods (\texttt{prePT}=1), the corresponding \texttt{prePT}=1 specification is selected in all repetitions for every value of G. The differences in average log marginal likelihood are substantial, indicating strong evidence in favor of the correctly specified model. Overall, the results suggest that the marginal likelihood provides reliable model selection performance for detecting whether pre-treatment parallel trends restrictions are supported by the data, even in relatively small samples ($n=200$).
\FloatBarrier
\begin{table}[htbp]
\centering
\caption{Model selection results for $n=200$}
\label{tab:ML_selection_allG_N200}
\scriptsize
\begin{threeparttable}
\resizebox{\textwidth}{!}{%
\begin{tabular}{llrrrrr}
\toprule
DGP $G$ & DGP prePT & Model & Avg. log ML & SD log ML & Wins & Win share \\
\midrule
1 & 0 & G1\_prePT0 & -60.07 & 22.02 & 48 & 0.96 \\
1 & 0 & G1\_prePT1 & -78.87 & 23.33 & 2 & 0.04 \\
\midrule
1 & 1 & G1\_prePT0 & -40.69 & 22.87 & 1 & 0.02 \\
1 & 1 & G1\_prePT1 & -31.55 & 22.12 & 49 & 0.98 \\
\midrule
2 & 0 & G2\_prePT0 & -88.19 & 22.32 & 50 & 1.00 \\
2 & 0 & G2\_prePT1 & -123.62 & 27.45 & 0 & 0.00 \\
\midrule
2 & 1 & G2\_prePT0 & -67.33 & 23.03 & 0 & 0.00 \\
2 & 1 & G2\_prePT1 & -49.20 & 20.97 & 50 & 1.00 \\
\midrule
3 & 0 & G3\_prePT0 & -111.01 & 21.09 & 50 & 1.00 \\
3 & 0 & G3\_prePT1 & -134.69 & 22.79 & 0 & 0.00 \\
\midrule
3 & 1 & G3\_prePT0 & -90.47 & 22.32 & 0 & 0.00 \\
3 & 1 & G3\_prePT1 & -66.43 & 20.18 & 50 & 1.00 \\
\bottomrule
\end{tabular}
}
\begin{tablenotes}[flushleft]
\footnotesize
\item Notes: 
For each  specification, the table compares models with prePT equal to 0 and 1 using the marginal likelihood. \item Wins report the number of Monte Carlo repetitions in which the model achieved the highest log marginal likelihood.
\end{tablenotes}
\end{threeparttable}
\end{table}

\FloatBarrier

\paragraph{Part B: Joint selection of $G$ and prePT}
The goal of this simulation study is to study whether marginal likelihood comparison can distinguish between alternative heterogeneity structures and pre-treatment restrictions. In particular, we examine whether the proposed framework appropriately penalizes unnecessarily fine stratifications while still detecting substantively relevant heterogeneity when supported by the data. 

Table \ref{tab:ML_selection_N250} shows that 
the marginal likelihood successfully distinguishes between alternative heterogeneity structures and pre-treatment specifications. Across all simulation designs, the model corresponding to the true ($G$, \texttt{prePT}) combination consistently attains the highest average log marginal likelihood and is selected in nearly all Monte Carlo repetitions. Importantly, the results indicate that the marginal likelihood appropriately penalizes unnecessarily fine stratifications. For example, when the true data-generating process contains only one stratum ($G=1$), richer specifications with $G=2$ or $G=3$ receive substantially lower marginal likelihood values despite their greater flexibility. Conversely, when treatment effect heterogeneity is genuinely present ($G=2$ or $3$), the marginal likelihood favors the corresponding richer specification once sufficient information is available in the data. These findings suggest that the proposed framework provides a useful probabilistic criterion for assessing the degree of subgroup heterogeneity supported by the data while balancing model flexibility against statistical precision.

\FloatBarrier
\begin{table}[htbp!]
\centering
\caption{Model selection results for $n=250$}
\label{tab:ML_selection_N250}
\scriptsize
\begin{threeparttable}
\resizebox{\textwidth}{!}{%
\begin{tabular}{lrrrrrr}
\toprule
DGP $G$ & DGP prePT & Model & Avg. log ML & SD log ML & Wins & Win share \\
\midrule
1 & 0 & G1\_prePT0 & -49.11 & 26.16 & 50 & 1.00 \\
1 & 0 & G1\_prePT1 & -78.18 & 27.38 & 0 & 0.00 \\
1 & 0 & G2\_prePT0 & -80.55 & 24.50 & 0 & 0.00 \\
1 & 0 & G2\_prePT1 & -99.97 & 23.08 & 0 & 0.00 \\
1 & 0 & G3\_prePT0 & -110.77 & 28.79 & 0 & 0.00 \\
1 & 0 & G3\_prePT1 & -124.91 & 27.88 & 0 & 0.00 \\
\midrule
1 & 1 & G1\_prePT0 & -24.36 & 27.53 & 0 & 0.00 \\
1 & 1 & G1\_prePT1 & -12.80 & 28.26 & 50 & 1.00 \\
1 & 1 & G2\_prePT0 & -55.05 & 25.70 & 0 & 0.00 \\
1 & 1 & G2\_prePT1 & -36.34 & 23.56 & 0 & 0.00 \\
1 & 1 & G3\_prePT0 & -85.82 & 28.40 & 0 & 0.00 \\
1 & 1 & G3\_prePT1 & -59.83 & 30.43 & 0 & 0.00 \\
\midrule
2 & 0 & G1\_prePT0 & -228.37 & 21.76 & 0 & 0.00 \\
2 & 0 & G1\_prePT1 & -270.73 & 21.39 & 0 & 0.00 \\
2 & 0 & G2\_prePT0 & -82.36 & 24.87 & 50 & 1.00 \\
2 & 0 & G2\_prePT1 & -133.96 & 25.04 & 0 & 0.00 \\
2 & 0 & G3\_prePT0 & -196.34 & 26.61 & 0 & 0.00 \\
2 & 0 & G3\_prePT1 & -236.37 & 23.44 & 0 & 0.00 \\
\midrule
2 & 1 & G1\_prePT0 & -193.82 & 22.59 & 0 & 0.00 \\
2 & 1 & G1\_prePT1 & -182.44 & 22.33 & 0 & 0.00 \\
2 & 1 & G2\_prePT0 & -54.99 & 24.97 & 0 & 0.00 \\
2 & 1 & G2\_prePT1 & -34.99 & 24.89 & 50 & 1.00 \\
2 & 1 & G3\_prePT0 & -167.20 & 25.38 & 0 & 0.00 \\
2 & 1 & G3\_prePT1 & -145.37 & 22.29 & 0 & 0.00 \\
\midrule
3 & 0 & G1\_prePT0 & -226.09 & 23.46 & 0 & 0.00 \\
3 & 0 & G1\_prePT1 & -255.81 & 24.00 & 0 & 0.00 \\
3 & 0 & G2\_prePT0 & -255.41 & 24.84 & 0 & 0.00 \\
3 & 0 & G2\_prePT1 & -281.11 & 22.97 & 0 & 0.00 \\
3 & 0 & G3\_prePT0 & -110.97 & 29.27 & 50 & 1.00 \\
3 & 0 & G3\_prePT1 & -146.63 & 30.02 & 0 & 0.00 \\
\midrule
3 & 1 & G1\_prePT0 & -197.27 & 23.68 & 0 & 0.00 \\
3 & 1 & G1\_prePT1 & -184.27 & 22.86 & 0 & 0.00 \\
3 & 1 & G2\_prePT0 & -223.81 & 24.77 & 0 & 0.00 \\
3 & 1 & G2\_prePT1 & -206.22 & 21.81 & 0 & 0.00 \\
3 & 1 & G3\_prePT0 & -84.04 & 29.64 & 0 & 0.00 \\
3 & 1 & G3\_prePT1 & -60.30 & 28.45 & 50 & 1.00 \\
\bottomrule
\end{tabular}
}
\begin{tablenotes}[flushleft]
\footnotesize
\item Notes: Each row reports the average marginal likelihood (log ML), its standard deviation across Monte Carlo repetitions, and the frequency with which the model achieved the highest log ML.
\end{tablenotes}
\end{threeparttable}
\end{table}

\clearpage
\section{Additional results for the application}
\FloatBarrier
\begin{table}[htbp]
\centering
\caption{Callaway--Sant'Anna ATT Estimates: Strata and Full Sample}
\label{tab:cs21_side_by_side}
\scriptsize
\begin{threeparttable}
\resizebox{\textwidth}{!}{%
\begin{tabular}{lrrrrrrrr}
\hline\hline
ATT & \multicolumn{4}{c}{Stratum-specific} & \multicolumn{4}{c}{Full sample} \\
 & Mean & SE & LB & UB & Mean & SE & LB & UB \\
\hline
$\tau_{\mathrm{ATT}}(2,2; 1)$ & -0.018 & 0.044 & -0.104 & 0.068 & -0.011 & 0.025 & -0.059 & 0.038 \\
$\tau_{\mathrm{ATT}}(2,3; 1)$ & -0.128 & 0.047 & -0.220 & -0.035 & -0.070 & 0.034 & -0.137 & -0.004 \\
$\tau_{\mathrm{ATT}}(2,4; 1)$ & -0.205 & 0.063 & -0.328 & -0.082 & -0.137 & 0.039 & -0.213 & -0.061 \\
$\tau_{\mathrm{ATT}}(2,5; 1)$ & -0.133 & 0.064 & -0.259 & -0.007 & -0.101 & 0.035 & -0.170 & -0.031 \\
$\tau_{\mathrm{ATT}}(4,4; 1)$ & -0.026 & 0.029 & -0.084 & 0.032 & -0.005 & 0.017 & -0.039 & 0.029 \\
$\tau_{\mathrm{ATT}}(4,5; 1)$ & -0.061 & 0.043 & -0.146 & 0.024 & -0.041 & 0.021 & -0.083 & 0.000 \\
$\tau_{\mathrm{ATT}}(5,5; 1)$ & -0.009 & 0.036 & -0.080 & 0.063 & -0.026 & 0.018 & -0.060 & 0.008 \\
$\tau_{\mathrm{ATT}}(2,2; 2)$ & -0.006 & 0.019 & -0.043 & 0.030 & -- & -- & -- & -- \\
$\tau_{\mathrm{ATT}}(2,3; 2)$ & -0.017 & 0.024 & -0.065 & 0.031 & -- & -- & -- & -- \\
$\tau_{\mathrm{ATT}}(2,4; 2)$ & -0.073 & 0.027 & -0.125 & -0.020 & -- & -- & -- & -- \\
$\tau_{\mathrm{ATT}}(2,5; 2)$ & -0.074 & 0.021 & -0.116 & -0.032 & -- & -- & -- & -- \\
$\tau_{\mathrm{ATT}}(4,4; 2)$ & 0.010 & 0.025 & -0.039 & 0.060 & -- & -- & -- & -- \\
$\tau_{\mathrm{ATT}}(4,5; 2)$ & -0.034 & 0.022 & -0.078 & 0.009 & -- & -- & -- & -- \\
$\tau_{\mathrm{ATT}}(5,5; 2)$ & -0.044 & 0.014 & -0.071 & -0.017 & -- & -- & -- & -- \\
\hline\hline
\end{tabular}}
\begin{tablenotes}[flushleft]
\footnotesize
\item Notes: Stratum-specific estimates apply Callaway--Sant'Anna separately within each stratum. The full-sample estimates use all counties. Stratum 1 is defined by $lpop < 3.2578$ and Stratum 2 by $lpop \geq 3.2578$. Standard errors are clustered at the county level.
\end{tablenotes}
\end{threeparttable}
\end{table}
\FloatBarrier

\end{document}